\def\DEL#1{}
\def\AD#1{} 
\def\RoraiQ#1{} 
\def\RMKA#1{}         
\newcommand{\bminil}[1]{\begin{minipage}[l]{#1 \textwidth}}
\newcommand{\bminir}[1]{\begin{minipage}[r]{#1 \textwidth}}
\newcommand{\bminic}[1]{\begin{minipage}[c]{#1 \textwidth}}
\newcommand{\emini}{\end{minipage}}
\newcommand{\EQL}{\begin{equation}\label}
\newcommand{\EQ}{\begin{equation}}
\newcommand{\EN}{\end{equation}}
\newcommand{\BFG}{\begin{figure}}
\newcommand{\EFG}{\end{figure}}
\newcommand{\ITM}{\begin{itemize}}
\newcommand{\ITN}{\end{itemize}}
\newcommand{\ENM}{\begin{enumerate}}
\newcommand{\EEN}{\end{enumerate}}
\newcommand{\BEA}{\[\begin{array}}
\newcommand{\EEA}{\end{array}\]}
\newcommand{\EQAL}{\begin{eqnarray}\label}
\newcommand{\EQA}{\begin{eqnarray}}
\newcommand{\ENA}{\end{eqnarray}}
\newcommand{\bB}{\mbox{\boldmath$B$}}
\newcommand{\bD}{\mbox{\boldmath$D$}}
\newcommand{\bN}{\mbox{\boldmath$N$}}
\newcommand{\bT}{\mbox{\boldmath$T$}}
\newcommand{\br}{\mbox{\boldmath$r$}}
\newcommand{\bs}{\mbox{\boldmath$s$}}
\newcommand{\bu}{\mbox{\boldmath$u$}}
\newcommand{\bv}{\mbox{\boldmath$v$}}
\newcommand{\bx}{\mbox{\boldmath$x$}}
\newcommand{\bomega}{\mbox{\boldmath$\omega$}}
\newcommand{\p}{\partial}
\newcommand{\dppt}{{\displaystyle\frac{\partial}{\partial t}}}
\newcommand{\ppt}{\frac{\partial}{\partial t}}
\newcommand{\ddt}{\frac{d}{dt}}
\newcommand{\dddt}{{\displaystyle\frac{d}{d t}}}
\newcommand{\bbR}{\mathbb{R}}
\newcommand{\biband}{\&~}
\newcommand\etall{\mbox{\textit{et al.}}}
\newcommand{\authone}[2]{{\sc #2,~#1}}
\newcommand{\authtwo}[4]{{\sc #2,~#1,~\&~#4,~#3}}
\newcommand{\auththr}[6]{{\sc #2,~#1,~#4,~#3,~\&~#6,~#5}}
\newcommand{\authfour}[8]{{\sc #2,~#1,~#4,~#3,~#6,~#5,~\&~#8,~#7}}
\newcommand{\authmanytwo}[4]{{\sc #2,~#1,~#4,~#3,}}
\newcommand{\authmanythr}[6]{{\sc #2,~#1,~#4,~#3,~#6,~#5,}}
\newcommand{\authmanyfour}[8]{{\sc #2,~#1,~#4,~#3,~#6,~#5,~#8,~#7,}}
\newcommand{\yjour}[6]{ #1~ #6 {\em #2} {\bf #3}, #4#5.}
\newcommand{\yproc}[7]{ #1~ #4. In {\em #5} (ed. #6), pp. #2-#3. #7.}
\newcommand{\ybook}[3]{ #1~ {\em #2}. #3.}
\newcommand{\yweb}[3]{ #1~ #2 #3.}
\numberwithin{figure}{section}
\numberwithin{equation}{section}
\newcommand{\mycomment}[1]{} 
\newcommand{\bibcomment}[1]{}
\newcommand{\rem}[1]{}
\DeclareMathAlphabet{\mathbi}{OML}{cmm}{b}{it}
\newtheorem{proposition}{Proposition}
\newtheorem{corollary}{Corollary}
\title[\vspace{-4mm} Separation of quantum vortices]{Approach and separation of quantum vortices with balanced cores}
\author{C. Rorai$^1$, J. Skipper$^2$, R. M. Kerr$^2$, K. R. Sreenivasan$^3$}
\affiliation{$^1$Department of Engineering, University of Cambridge, Trumpington Street, Cambridge, UK  CB2 1PZ ceciliarorai@gmail.org\\[\affilskip]
$^2$Department of Mathematics, University of Warwick, Coventry, UK CV4 7AL\\[\affilskip]
$^3$Departments of Physics, Mechanical and Aerospace Engineering, and the Courant Institute of Mathematical Sciences, New York University, Bobst Library, 70 Washington Square South, New York City, NY 10012}
\begin{document}
\maketitle
\begin{abstract}
The scaling laws of isolated quantum vortex reconnection are characterised by numerically 
integrating  the three-dimensional Gross-Pitaevskii equations, the simplest mean-field 
equation for a quantum fluid. The {\it primary} result is the identification of distinctly 
different  temporal power laws for the pre- and post-reconnection separation distances 
$\delta(t)$ for two configurations. For the initially anti-parallel case, the scaling laws 
before and after the reconnection time $t_r$ obey the dimensional 
$\delta\sim|t_r-t|^{1/2}$ prediction with temporal symmetry about $t_r$ and physical 
space symmetry about the mid-point between the vortices $x_r$. The extensions of the 
vortex lines close to reconnection form the edges of an equilateral pyramid. For all of 
the initially orthogonal cases, $\delta\sim|t_r-t|^{1/3}$ before reconnection and 
$\delta\sim|t-t_r|^{2/3}$ after reconnection, which are respectively slower and faster 
than the dimensional prediction. For both configurations, smooth scaling laws are 
generated due to two innovations. One is to initialise with density profiles about 
the vortex cores that 
suppress unwanted secondary temporal density fluctuations. 
The other innovation is the accurate identification of the position of the vortex cores 
from a pseudo-vorticity constructed on the three-dimensional grid from the gradients of 
the wave function. These trajectories allow us to calculate the Frenet-Serret frames and 
the curvature of the vortex lines, {\it secondary} results that might hold clues for 
the origin of the differences between the scaling laws of the two configurations. 
For the orthogonal cases, the reconnection takes place in a 
{\it reconnection} plane defined by the directions of the curvature and vorticity. 
To characterise the structure further, lines are drawn that connect the four arms that 
extend from the reconnection plane, from which four angles $\theta_i$ between the lines 
are defined. Their sum is convex or hyperbolic, that is
$\sum_{i=1,4}\theta_i>360^\circ$, 
as opposed to the acute angles of the pyramid found for the anti-parallel initial conditions.
\end{abstract}
\begin{keywords}
Gross-Pitaevskii equations, Bose-Einstein condensate, quantum fluids, vortex reconnection
\end{keywords}
\vspace{-6mm} 
\section{Background\label{sec:back}}

The term ``quantum turbulence'' refers to a tangle of quantum vortex lines, a tangle whose formation and decay is determined by how these vortices collide, reconnect and separate. 
Although superfluid tangles form in a variety of $^3$He or $^4$He experiments such as
counter-flow, moving grids, colliding vortex rings \citep{SkrbekSreeni12,WalmsleyGolov08}, 
until recently very little has been known directly about the underlying microscopic 
interactions. Instead, the nature of the vortex interactions has been inferred from how
rapidly the tangle decays.

Theoretically, the observed decay has been linked to the conversion of the kinetic 
energy of the vortices into other forms of energy. This could be conversion into
the kinetic energy of the normal component in higher temperature experiments or into the 
interaction energy and quantum waves 
in low temperature quantum fluids, including Bose-Einstein condensates. 
Despite this, most of our current theoretical insight into quantum vortex reconnection has 
been through Lagrangian, Biot-Savart simulations of isolated vortex filaments, a
dynamical system that does not include the terms for the interaction energy. Why?

Part of the reason is that the Lagrangian approach has experimental support, 
most recently by comparisons 
between experiments tracking quantum vortices with solid hydrogen particles
\citep{Bewleyetal_PNAS08,Paolettietal08} and the scaling in the 
filament calculation of 
initially anti-parallel vortices by \cite{deWaeleAarts94}. 
In both cases, the minimum separation distance between vortices, $\delta$, was interpreted
in terms of the dimensional analysis based upon the circulation $\Gamma$ of the vortices. 
That is, if $\dddt\delta\sim v\sim \Gamma/\delta$, then one would expect that
\EQL{eq:deltat} \delta(t)\sim (\Gamma |t_r-t|)^{1/2}\,. \EN
This will be called the {\it dimensional scaling}. 

Alternatively, one can simulate the underlying mean-field equations of quantum fluids and visualise vortex reconnection by following the low density isosurfaces that surround the zero density cores. The problem with this approach is that tracking the motion of the vortices within these isosurfaces is difficult, even for single interactions.

The aim of this paper is to begin to fill that gap using two innovations for solutions of the 
mean-field, hard-sphere Gross-Pitaevskii equations.  One innovation is an initial condition 
that suppresses fluctuations in the temporal scaling of separations and
the second is a method for identifying the position of the vortex cores. These innovations will be used
to determine scaling laws for two classes of initial configurations, orthogonal or anti-parallel 
vortices. 

The conclusion will be that the scaling laws for the minimum separation distance between the two vortices in the two configurations are distinctly different, even when the pairs are just several core radii apart. The anti-parallel case obeys the expectations from \eqref{eq:deltat}, but the orthogonal cases consistently obey a distinctly different type of scaling.  
The two sets of scaling laws will be associated with differences in the alignment of their 
respective Frenet-Serret coordinate frames, differences that form almost immediately

This paper is organised as follows. First, the equations and the initialisation
of the  model are introduced, followed by overviews of the anti-parallel and orthogonal
global evolution in three-dimensions. Next, the methods used to identify the trajectories of the vortices and the local properties of the Frenet-Serret frame, including curvature, are explained. The numerical results, arranged by the type of simulation, orthogonal and anti-parallel, are then described. The results include the time dependence of the separation of the vortices, the curvature along the vortices and the alignments in terms of the Frenet-Serret frames. Finally, the differences between the two classes of initial conditions are discussed and how these differences might affect the observed scaling laws for the approach and release of reconnecting vortices.
\section{Equations, numerics and initial condition}\label{sec:eqIC}

Following \cite{Berloff04}, the three-dimensional Gross-Piteavskii equations for the complex wave function or order-parameter $\psi$ are
\EQL{eq:GP} \frac{1}{i}\ppt{\psi} = E_v\nabla^2\psi + 
V(|\bx-\bx'|)\psi(1-|\psi|^2)\quad{\rm with}\quad E_v=0.5~~{\rm and}~~
V(|\bx-\bx'|)=0.5\delta(\bx-\bx')\,. \EN
These are the mean-field equations of a microscopic, quantum system with
$\hbar$ and $m$ non-dimensionalized to be 1, a chemical potential of $E_v=0.5$ 
and using the hard-sphere approximation for $V(|\bx-\bx'|)$.  They
are an example of a defocusing nonlinear Schr\"odinger equation. All calculations
in this paper will use (\ref{eq:GP}). 

These equations conserve mass:
\EQL{eq:GPmass} M = \int dV |\psi|^2 \EN 
and a Hamiltonian
\EQL{eq:GPHamiltonian} H = \frac{1}{2}\int dV \left[ \nabla \psi \cdot \nabla \psi^\dag 
+ 0.25(1-|\psi|^2)^2 \right] \EN
where $\psi^\dag$ is the complex conjugate of $\psi$. The local strength of the mass 
density, kinetic or gradient energy $K_{\psi}$ and the interaction energy $I$ are
\EQL{eq:GPenergies} \rho=|\psi|^2,\quad K_{\psi}=\frac{1}{2}|\nabla\psi|^2
\quad{\rm and}\quad I(\bx)=\frac{1}{4}(1-|\psi|^2)^2  \EN
Isosurfaces of $\rho$ are used in all the three-dimensional visualisations and 
$K_{\psi}$ are included in figures \ref{fig:antipstreamrhoT0T4}, \ref{fig:orthoazel}, 
\ref{fig:orthoNP}, \ref{fig:orthoT6-12} and \ref{fig:antipstreamrhoTtr}.

Gross-Pitaevskii calculations have previously identified the following
features of quantum vortex reconnection.  First, it has been demonstrated 
\citep{Leadbeateretal03,Berloff04,Kerr11} that the line length grows
just prior to reconnection, indicating a type of vortex stretching. 
Second, reconnection radiates energy, either as sound waves 
\citep{Berloff04,Leadbeateretal01,Leadbeateretal03}, non-linear refraction 
waves \citep{Berloff04,Zuccheretal2012} or strongly non-linear vortex rings 
\citep{Leadbeateretal03,Berloff04,Kerr11}.  About 10\% of the 
initial kinetic energy $K_{\psi}$ is lost by these means
during the initial reconnection \citep{Kerr11}.  With added terms representing
assumptions about the type of energy depletion at small-scales, 
these equations can also give us hints to why the vortex tangle decays
\citep{SasaKMLvovRTsubota11}.

\subsection{Quasi-classical approximations \label{sec:quasiclassic}}

But how can the continuum Gross-Pitaevskii equations provide us with details about the Lagrangian
dynamics and reconnections that underlie the vortex tangle of quantum fluids?

This can be done by writing the wave function as $\psi=\sqrt{\rho}e^{i\phi}$, 
where $\rho$ is the density and $\phi$ is the complex phase, then defining 
the phase velocity $\bv_\phi$ and quantised circulation $\Gamma$ around the line defects:
\EQL{eq:Madelung}
\bv_\phi=\nabla\phi={\rm Im}(\psi^\dag\nabla\psi)/\rho \qquad
\Gamma=\int \bv_\phi\cdot\bs=2\pi \EN
then identifing the $\rho\equiv0$ line defects as quantum vortices whose dimensionless quantised circulation is $\Gamma$. Even though these lines cannot represent a true vorticity field because the vorticity $\bomega=\nabla\times\bv_\phi=\nabla\times\nabla\cdot\phi\equiv0$. In this picture, vortex reconnection appears naturally as the instantaneous re-alignment of these lines and exchange of circulation when the line defects meet. If dimensions were added, the quantised circulation has the classical units of circulation: $\Gamma\sim \rho L^2/T$. Note these two differences with classical vortices governed by the Navier-Stokes equation: classical circulation is not quantised and viscous reconnection is never 100\%.

To extract the Lagrangian motion of quantum vortices from fields defined on three-dimensional meshes  these issues must be addressed:
\ITM\item As the state relaxes from its initial form, it should not be dominated by 
either interal waves (phonons) or strong fluctuations along the vortex trajectories
(Kelvin waves).
\item Second, a method is needed for identifying the direction and positions that the vortices follow as they pass through the three-dimensional mesh.
\ITN

Two innovations introduced in this paper, \eqref{eq:initBerloff} for the initial
density and \eqref{eq:pseudov} for tracking vortices,
resolve both problems and allow us to extract smooth motion for the vortices from the 
calculated solutions of the Gross-Pitaevskii equations on Eulerian meshes. 

To complete the discussion of the Gross-Pitaevskii equations \eqref{eq:GP}, the full analogy to classical hydrodynamic equations comes from inserting
$\psi=\sqrt{\rho}\exp(i\phi)$ into (\ref{eq:GP}) to get the standard equation 
for $\rho$ and a Bernoulli equation for $\phi$:
\EQL{eq:rhophi} \dppt\rho+\nabla\cdot(\rho\bv_\phi)=0\qquad
\ppt\phi+(\nabla\phi)^2=0.5(1-\rho)+\nabla^2\sqrt{\rho}/\sqrt{\rho}\,. \EN
The $\bv_\phi$ velocity equation can then be formed by taking 
the gradient of the $\phi$ equation.

As in \cite{Kerr11},
the numerics are a standard semi-implicit spectral algorithm where the
nonlinear terms are calculated in physical space, then transformed to
Fourier space to calculate the linear terms.  In Fourier space,
the linear part of the complex equation is solved through integrating factors with
the Fourier transformed nonlinearity added as a 3rd-order Runge-Kutta explicit 
forcing.  The domain is imposed by using no-stress cosine transforms in all
three directions. 
For all of the calculations the domain size 
is $L_x\times L_y\times L_z=(16\pi)^3\approx50^3$ or $(32\pi)^3$.
Both $128^3$ and $256^3$ grids were used, with the $256^3$ grid giving smoother
temporal evolution. Most of the analysis and graphics will use the $\delta_0=3$, $256^3$
calculation.  
\BFG
\bminic{0.48}
\vspace{-10mm}\includegraphics[scale=.175]{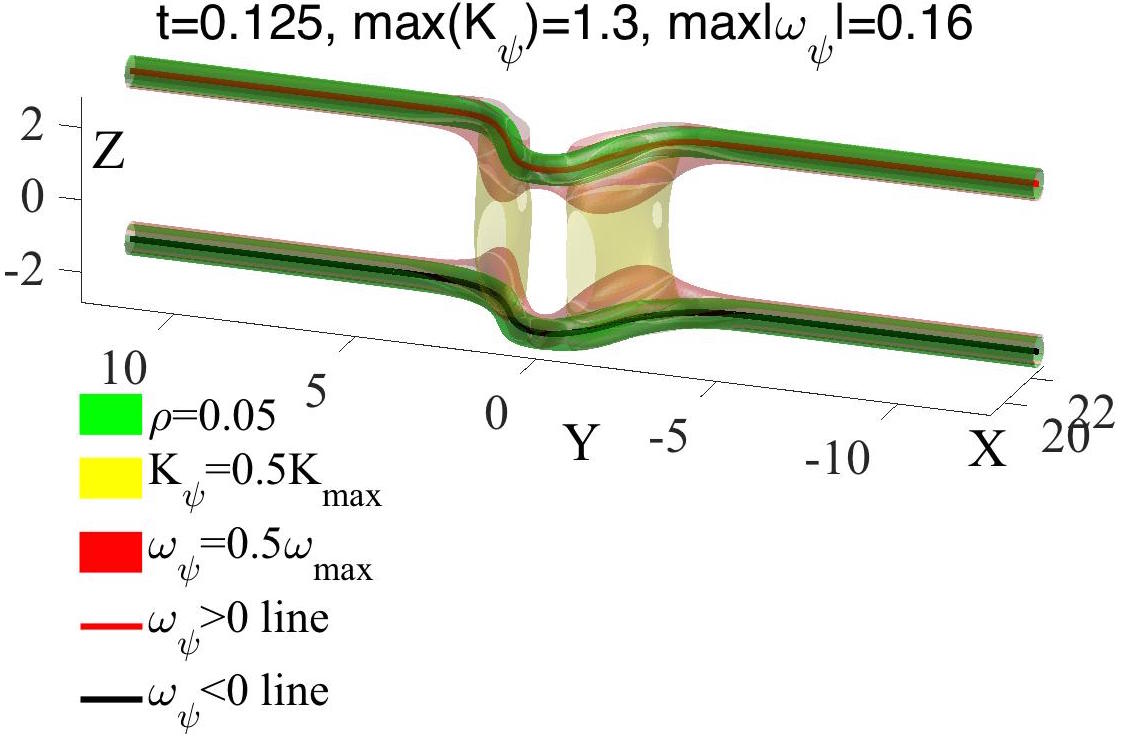}
\begin{picture}(0,0)
\put(0,130){(a)}
\end{picture}\emini~~~\bminic{0.48}
\includegraphics[scale=.175]{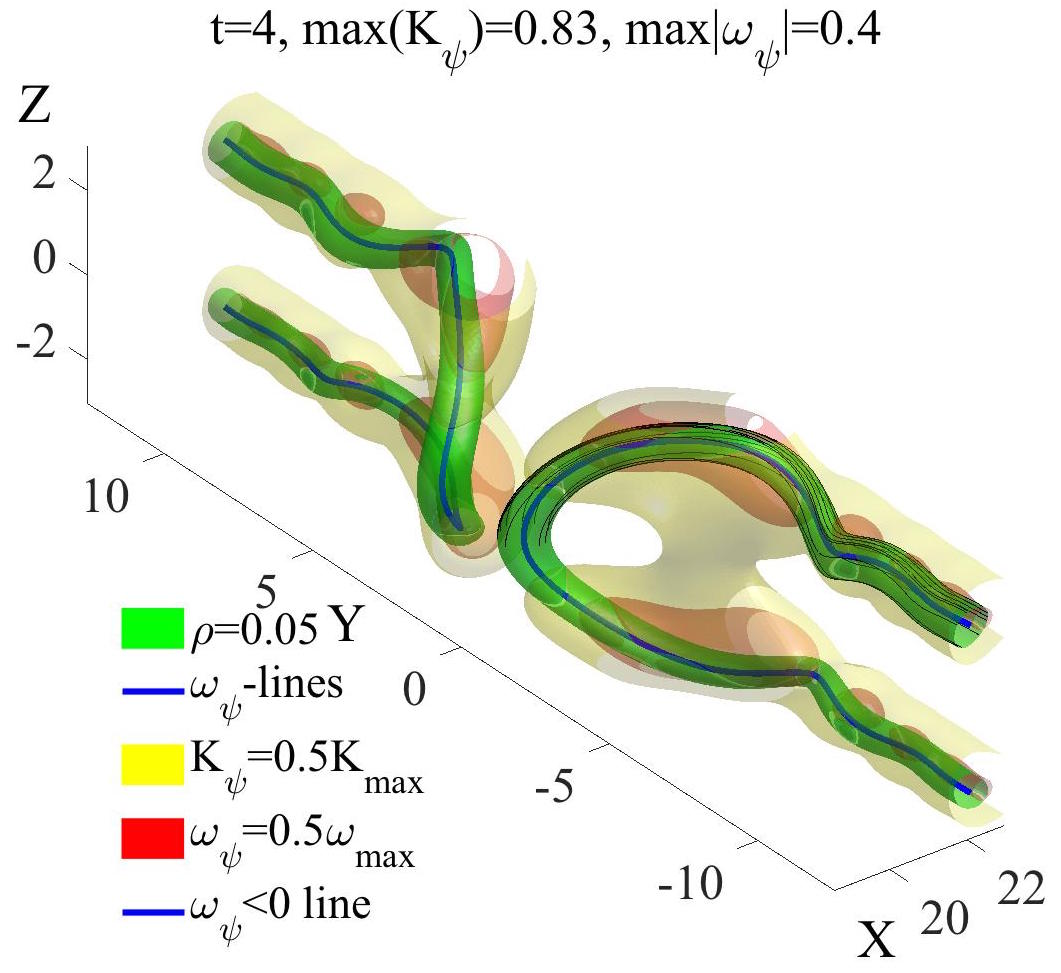}
\begin{picture}(0,0)
\put(0,168){(b)}
\end{picture}\emini
\caption{\label{fig:antipstreamrhoT0T4} Anti-parallel case: 
Density, kinetic energy $K_{\psi}=|\nabla\psi|^2$ and pseudo-vorticity 
$|\nabla\psi_r\times\nabla\psi_i|$ isosurfaces plus vortex lines at these times:
(a) $t=0.125$, with $\max(K_{\psi})=1.3$, 
and (b) $t=4$ with $\max(K_{\psi})=0.83$. $\max|\omega_\psi|=0.4$ for all the times.
The vortex lines show that the pseudo-vorticity method not only follows the
lines $\rho=0$ well, but the $t=4$ frame also shows that it can be used to 
follow isosurfaces of fixed $\rho$. The $t=0.125$ frame shows that initially the 
kinetic energy is largest between the elbows of the two vortices, with large
$\omega_\psi$ at the elbows.  The $t=4$ frame
shows the newly reconnected vortices as they are separating  with
undulations on the vortex lines that will form additional reconnections 
and vortex rings at later times.
}
\EFG

\BFG
\bminil{0.5}
\includegraphics[scale=.65]{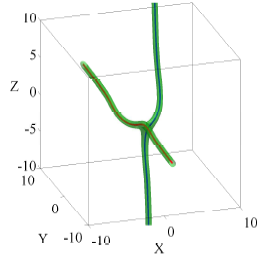}
\put(-140, 145){(a)}
\put(-95,145){t = 6}\\
\includegraphics[scale=.65]{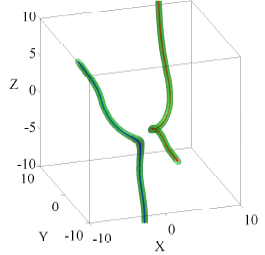}
\put(-140,135){(b)}
\put(-95,135){t = 14}
\caption{\label{fig:orthoazel}
Overview of the evolution of the $\delta_0=3$ $256^3$ orthogonal calculation from 
a three-dimensional perspective for two times: $t=6$ as reconnection is beginning and 
$t=14$ after it has ended, where $t_r\approx 8.9$. Green $\rho=0.05$ isosurfaces encase the
vortex cores, blue and red lines. The Frenet-Serret frames for $t=8.75$ are given 
in figure \ref{fig:orthoT6-12}. 
A twist in the post-reconnection $t=14$ right ($x>0$) vortex is visible.
}
~~~ \emini ~~~~~~\bminir{0.5}
\includegraphics[scale=.65]{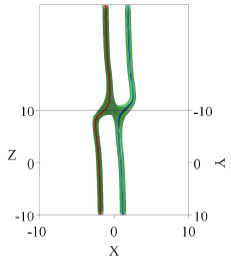}
\put(-140, 145){(a)}
\put(-95,145){t = 6}\\
\includegraphics[scale=.65]{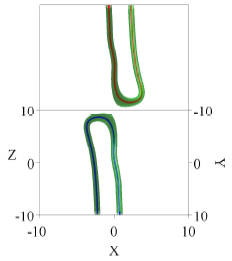}
\put(-140,135){(b)}
\put(-95,135){t = 14}
\caption{\label{fig:orthoNP} The same $\delta_0=3$ fields and times as in 
figure \ref{fig:orthoazel} from the Nazarenko perspective, which looks
down the 45$^\circ$ {\it propagation plane} in the $y-z$-plane in 
figure \ref{fig:yzplane}.  This perspective is useful
because the pre- and post-reconnection symmetries can be seen most clearly, which is
why it is the basis for comparison of reconnection angles in 
figure \ref{fig:NPplane} and section \ref{sec:orthoangles}. 
The arms are extending away 
from the {\it reconnection plane}, either towards or away from the viewer,
Also note that from this perspective, loops are clearly not forming.  }
\emini
\EFG

\subsection{Choice of initial configurations and profiles \label{sec:profiles}}

{\bf Configurations.} Two initial vortex configurations are used in this paper, 
anti-parallel vortices with a perturbation, and orthogonal vortices. 
Both configurations have been used many times 
for both classical (Navier-Stokes) and quantum fluids, including
the first calculations using the three-dimensional 
Gross-Pitaevskii equations \citep{KoplikLevine93}.  

The advantage of focusing on these configurations is that the interactions leading to
reconnection for most other configurations, for example colliding or initially 
linked, vortex rings, can be reduced to either anti-parallel or orthogonal 
dynamics, both of which can resolve the reconnection events in smaller global domains.
This is because the initial reconnection events require, 
effectively, only half of each ring. 

These two configurations also represent the two extremes for the initial chirality or 
linking number of the vortex lines in a quantum fluid.  In classical fluids the 
corresponding global property is the helicity, $h=\int (\nabla\times\bu)\cdot\bu\, dV$. 
When $h$ is large, it tends to suppress nonlinear interactions.  
Anti-parallel initial conditions have zero net helicity and
are sensitive to initial instabilities while orthogonal initial conditions have a
large helicity, so reconnection can be delayed \citep{BoratavPelzZ92}.

{\bf Density profile.} The density profiles for all the vortex cores in this paper
are determined by the following Pad\'e approximate:
\EQL{eq:initBerloff}\begin{array}{c}\medskip |\psi_{sb}|=\sqrt{\rho_{sb}}=
\dfrac{c_1r^2+c_2r^4}{1+d_1 r^2+d_2 r^4} \\
{\rm with}\quad c_1=0.3437,~~c_2=0.0286,~~d_1=0.3333,~~{\rm and}~~d_2=0.02864\,. 
\end{array}\EN

Note that $c_2\lesssim d_2$, which implies that as $r\rightarrow\infty$, the density
approaches the usual background of $\rho=1$ from below more slowly than the 
true Pad\'e of this order does. The true Pad\'e, derived by \cite{Berloff04}, has
$c_2=d_2=11/384\approx0.02864$. Therefore, the profile with $c_2\lesssim d_2$ is designated as 
$\psi_{sb}$ because it is a sub-\cite{Berloff04} profile. 
Furthermore, because the calculations are in finite domains, to ensure that 
the Neumann boundary conditions are met, a set of up to 24 mirror images of 
the vortices are multiplied together. This multiplication process takes the slight,
original $\rho_{sb}<1$ values as $r\rightarrow\infty$ and generates 
stronger differences. At the boundaries, this gives $(1-\rho)\approx0.02-0.03$. 

For all of the configurations discussed here, using this $|\psi|\lesssim1$ initial 
profile appears to be crucial in allowing us to obtain clear scaling laws for the 
pre- and post-reconnection separation of the vortices.  As discussed in
subsection \ref{sec:subBerloff}, further tests have confirmed that 
the temporal separations of all of the
true Pad\'e approximates have significant fluctuations.

\begin{table}
\centering
\begin{tabular}{c | c c c c c }
 \hline
type & mesh & ~~~~$\delta_0$ & $t_r$ & $K_{\psi}(t=0)$  & $I_0$ \\
 \hline
{\large$\perp$ $(16\pi)^3$} & $256^3$ & ~~~~2 & 2.7 & 0.0074 & 0.0023 \\
 \hline
{\large$\perp$ $(16\pi)^3$} & $128^3$ & ~~~~3 & 8.9 & 0.0074 & 0.0023 \\
 \hline
{\large$\perp$ $(16\pi)^3$} & $256^3$ & ~~~~3 & 8.9 & 0.0074 & 0.0023 \\
 \hline
{\large$\perp$ $(16\pi)^3$} & $256^3$ & ~~~~4 & 21 & 0.0074 & 0.0023 \\
 \hline
{\large$\perp$ $(32\pi)^2\times16\pi$} & $256^2\times128$ & ~~~~4 & 21 & 0.00178 & 0.00049 \\
 \hline
{\large$\perp$ $(32\pi)^3$} & $256^3$ & ~~~~5 & 40 & 0.0074 & 0.0023 \\
 \hline
{\large$\perp$ $(32\pi)^3$} & $256^3$ & ~~~~6 & 68 & 0.0074 & 0.0023 \\
 \hline
{\large$\|$ $(16\pi)^3$} & $256^3$ & ~~~~4 & 2.44 & 0.0056 & 0.0031 \\
\end{tabular}
\caption{Cases: Type is orthogonal {\large$\perp$} or anti-parallel {\large$\|$}, initial 
separations, approximate reconnection times,
initial kinetic and interaction energies.}
\label{tab:ortho}
\end{table}

Once the best profile has been chosen, then one must choose the trajectories of the
interacting vortices.  The anti-parallel global states are shown first in
figure \ref{fig:antipstreamrhoT0T4} as they
illustrate the use of all of the three-dimensional diagostics. 

\subsection{Anti-parallel: Initial trajectory and global development.\label{sec:antiptraj}}

Based upon past experience with classical vortices and \cite{Kerr11}, the positions 
$\bs_{\pm}(y)$ of the two nearly anti-parallel $y$-vortices, with a perturbation 
in the direction of propagation $x$, were
\EQL{eq:newtraj} \bs_{\pm}(y) =
\left(\delta_x\frac{2}{\cosh\bigl((y/\delta_y)^{1.8}\bigr)}-x_c,y,\pm z_c\right)\,. \EN
The parameters used were $\delta_x=-1.6$, $\delta_y=1.25$, $x_c=21.04$ and $z_c=2.35$.
The power of 1.8 on the normalized position
was chosen to help localise the perturbation near the $y=0$ symmetry plane.  
The density profiles
were applied perpendicular to this trajectory, and not perpendicular to the $y$-axis.

As in \cite{Kerr11}, two of the Neumann boundaries act as symmetry planes
to increase the effective domain size. These planes are the $y=0$, $x-z$
{\it perturbation plane} and $z=0$, $x-y$ {\it dividing plane}.
Because the goal of this calculation was to focus upon the scaling around the first reconnection, the long domain used in \cite{Kerr11} to generate a chain of vortices is unnecessary and $L_y$ is less. In addition, based upon recent experience with Navier-Stokes reconnection \citep{Kerr13}, $L_z$ was increased to ensure that the evolving vortices do not see their mirror images across the upper $z$ Neumann boundary condition.

Figure \ref{fig:antipstreamrhoT0T4} shows the state at $t=0.125$, 
essentially the initial condition, and the state
at $t=4$, after the first reconnection event at $t_r\approx2.4$. Three isosurfaces are
given.  Low density isosurfaces ($\rho=0.05$), 
isosurfaces of the kinetic energy $K_{\psi}$ \eqref{eq:GPenergies}
and isosurfaces of $|\bomega_\psi|$ \eqref{eq:pseudov}, 
a pseudo-vorticity that is introduced in the next section. 
The vortex lines defined by $\bomega_\psi$ and Proposition 1 are shown using
thickened curves.
The structure at the time of reconnection is discussed in section \ref{sec:antipresults} 
using Figure \ref{fig:antipstreamrhoTtr} and how the flow would develop later 
has already been documented by \cite{Kerr11}, which shows several reconnections forming
a stack of vortex rings. 

Both $K_{\psi}$ and $\bomega_\psi$ are functions of the first-derivatives of the
wave function, but show different aspects of the flow. The $K_{\psi}$ 
isosurfaces show where the momentum is large. Initially, the momentum is  dominated by 
forward motion between the perturbations, as shown for $t=0.125$.
Post-reconnection, at $t=4$, the $K_{\psi}$ surfaces show that the 
primary motion is around the vortices.  $|\bomega_\psi|$ is large where the 
vortex cores bulge and have the greatest curvature. 

\subsection{Orthogonal: Initial separations and global development.
\label{sec:ortho}}

To place the orthogonal vortices one only needs to choose one line parallel to the 
$y$ axis and another line parallel to the $z$ axis through two points in $x$ on either 
side of $x=y=z=0$.  The five separations and other details of the simulations 
are given in Table \ref{tab:ortho}.
Because all of the orthogonal cases with $\delta_0\geq2$ behave qualitatively in the same 
manner, all of the orthogonal three-dimensional images will be taken from the $256^3$ 
$\delta_0=3$ calculation, whose estimated reconnection is at time $t_r=8.9$.  

The first two sets of three-dimensional isosurfaces and vortex lines in figures 
\ref{fig:orthoazel} and \ref{fig:orthoNP} are used to show the global evolution
of the vortices through reconnection, with  figure \ref{fig:orthoazel}
providing a true three-dimensional perspective of the orthogonality and
figure \ref{fig:orthoNP}, defined by figure \ref{fig:NPplane}, providing a 
perspective down the $y=z$ axis, which is also used for the determination of 
three-dimensional angles in subsection \ref{sec:orthocurve}. 
The two times chosen for each are $t=6$, pre-reconnection, and $t=19$, post-reconnection.

The two isosurfaces are for a low density of $\rho=0.05$ and kinetic energy of
$K_{\psi}=0.5$, where $\max(K_{\psi})=0.58$. 
There are two pseudo-vortex lines 
\eqref{eq:pseudov} in each frame, one that originates on the $y=0$ plane 
and the other on the $z=0$ plane.
The $t=8.5$ frame also shows some additional 
orientation vectors that will be discussed in section \ref{sec:orthoresults}.  

{\bf Qualitative features are:}
\ITM\item The initially orthogonal vortices are attracted towards each other 
at their points of closest approach, asymmetrically bending out towards each other. 
\item During this stage there is a some loss of the kinetic energy between $t=8$ and 
$t=11$, $\Delta_t K_{\psi}<10\% K_{\psi}(t=0)$. This
is converted into interaction energy $I$ (\ref{eq:GPenergies}). 
There are no further noticeable changes in $K_{\psi}$ for $t>12$. 
\item After reconnection, from one perspective there is
a slight twist on one vortex, but it is not twisted enough for the vortices to
loop back upon themselves and reconnect again. Instead, the two new vortices pull back 
from one another, as shown by the $t=19$ frame in
figure \ref{fig:orthoazel}.  
Consistent with experimental observations of vortex interactions 
using solid hydrogen particles \citep{Bewleyetal_PNAS08,PhysTodayJul10}
in the sense that post-reconnection filaments simply pull back from one 
another and do not loop.  
\item Two sketches of the alignments and an addition isosurface perspective are
used in subsection \ref{sec:orthogeo} to illustrate this evolution further.
\ITN
\section{Approach and separation of vortex lines: Methodology}\label{sec:inout}

The primary result in this paper will be the differences in the temporal scaling 
of the pre- and post-reconnection separation of the vortices for the two
configurations.  The secondary results are clues for the origin of these 
differences in the evolution of the local curvature and Frenet-Serret frames 
on the vortices for the two configurations.
To achieve this, one needs an initial condition for which the evolution of
the vortices is smooth and a means to follow that evolution.  The key ingredient
of the initial condition is provided by the choice of coefficients in
\eqref{eq:initBerloff}.  This section will show what is needed to accurately
track the vortices.  Two methods for detecting the vortices have been used.  

\subsection{Detecting lines by finding $\rho=0$ mesh cells.}\label{sec:meshcells}

The first approach is to estimate the locations of the $\rho\equiv0$ quantum 
vortex lines by extracting the positions of the vertices of a $\rho\gtrsim0$ 
isosurface mesh, determined by Matlab, then average these positions.
For this method to work, the density $\rho$ has to be small enough so that there
are only 3-6 points clustered in a plane perpendicular to the vortex lines.
The method begins to fail around reconnection points because there is an extensive
$\rho\approx0$ zone as the cores of the vortices start to overlap, resulting in
the isosurface points that are too far apart to make reliable estimates 
for the positions of the $\rho\equiv0$ cores.  
Due to these problems, this approach is used only 
for providing the seeds for our preferred approach and its validation.  

\subsection{$\bomega_\psi=\nabla\rho\times\nabla\phi$ pseudo-vorticity method\label{sec:pseudo}}

The second approach begins by recognising that the line of zero density should be 
perpendicular to the gradients of the real and imaginary parts of the wave function.
Therefore it is useful to define the following {\it pseudo-vorticity}:
\EQL{eq:pseudov} \bomega_\psi=0.5\nabla\psi_r\times\nabla\psi_i \EN

The inspiration for this approach comes from how to write the vorticity in terms
of a cross production of the scalars in Clebsch pairs, which is an alternative 
approach to representing  the incompressible Euler equations.

\begin{proposition}\label{th:omegarho} At points with $\rho=0$, the direction of the quantum vortex line is
defined by the direction of $\bomega_\psi=0.5\nabla\psi_r\times\nabla\psi_i$.
\end{proposition}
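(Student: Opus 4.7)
The plan is to reduce the proposition to an elementary fact about the intersection of two regular level sets. Since $\rho = |\psi|^2 = \psi_r^2 + \psi_i^2$ is a sum of squares, the identity $\rho(\bx)=0$ is equivalent to the pair of conditions $\psi_r(\bx)=0$ and $\psi_i(\bx)=0$ holding simultaneously. Thus the vortex core $\{\rho=0\}$ is exactly the intersection $\Sigma_r\cap\Sigma_i$ of the two real nodal surfaces $\Sigma_r=\{\psi_r=0\}$ and $\Sigma_i=\{\psi_i=0\}$, and the question reduces to identifying the tangent direction of this intersection.

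Next, I would invoke the implicit function theorem at a generic core point, that is, a point at which $\nabla\psi_r$ and $\nabla\psi_i$ are linearly independent. Under this transversality hypothesis, each of $\Sigma_r,\Sigma_i$ is locally a smooth 2D surface with normal $\nabla\psi_r$ and $\nabla\psi_i$ respectively, and $\Sigma_r\cap\Sigma_i$ is a smooth embedded 1D curve. Its tangent at the point must lie in both tangent planes, hence must be orthogonal simultaneously to $\nabla\psi_r$ and to $\nabla\psi_i$. The only direction (up to sign) orthogonal to both is $\nabla\psi_r\times\nabla\psi_i$, i.e.\ $\bomega_\psi$ up to a positive scalar, which establishes the direction claim modulo orientation.

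To fix the orientation consistently with the quantised circulation $\Gamma=2\pi$ from \eqref{eq:Madelung}, I would perform a short local-coordinate computation. Using $(\psi_r,\psi_i)$ as transverse coordinates in the plane orthogonal to $\bomega_\psi$ and substituting $\bv_\phi = {\rm Im}(\psi^\dag\nabla\psi)/\rho$ into a small loop integral encircling the core reduces $\oint\bv_\phi\cdot d\bs$ to the winding number of the complex map $\psi_r+i\psi_i$, returning $\pm 2\pi$. The sign is positive precisely when $(\nabla\psi_r,\nabla\psi_i,\bomega_\psi)$ forms a right-handed frame, which matches the definition \eqref{eq:pseudov}; this ties the orientation of $\bomega_\psi$ to the sense of circulation around the line and completes the identification.

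The main obstacle is the non-generic locus where $\nabla\psi_r$ and $\nabla\psi_i$ become parallel, since there $\bomega_\psi$ vanishes and the transversality hypothesis fails. A rigorous statement should therefore assume explicitly that $0\in\mathbb{C}$ is a regular value of the map $\psi:\mathbb{R}^3\to\mathbb{C}$ in a neighbourhood of the core — an assumption implicit in speaking of a ``vortex line'' at all, since it is exactly what forces the zero locus to be a smooth 1D submanifold. In practice one expects violations only at isolated pinch points (for example, the instant of reconnection itself), past which $\bomega_\psi/|\bomega_\psi|$ extends continuously as a unit tangent to the filament, which is why the method remains useful even in the reconnection region where the $\rho\approx 0$ mesh-cell approach of subsection \ref{sec:meshcells} breaks down.
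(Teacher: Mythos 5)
Your argument is correct and follows essentially the same route as the paper's own proof: both reduce $\rho=0$ to the simultaneous vanishing of $\psi_r$ and $\psi_i$, observe that the tangent to the line must annihilate both $\nabla\psi_r$ and $\nabla\psi_i$ since these functions stay zero along it, and conclude the direction is the cross product. Your additions --- making the transversality/regular-value hypothesis explicit (the paper's ``this is only possible if'' silently assumes $\nabla\psi_r$ and $\nabla\psi_i$ are independent, which fails exactly where $\bomega_\psi=0$) and tying the sign of $\bomega_\psi$ to the sense of the quantised circulation --- are sound refinements rather than a different proof.
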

\begin{proof}
A quantum vortex line is defined by $\rho \equiv 0$, which because $\rho=\psi_r^2+\psi_i^2$ implies that $\psi_r=\psi_i=0$ on this vortex line.

Then define $\hat{\omega}_\rho$ the direction vector of the line at any arbitary point on the vortex line. By the definition of the line, the values of $\psi_r$ and $\psi_i$ in this direction must not change and thus we know that $\nabla\psi$ must satisfy
$$\hat{\omega}_\rho\cdot\nabla\psi_r=\hat{\omega}_\rho\cdot\nabla\psi_i=0 $$ 
at these points. This is only possible if 
$\hat{\bomega}_\rho=(\nabla\psi_r\times\nabla\psi_i)/
|\nabla\psi_r\times\nabla\psi_i|$.
\end{proof}

By itself, this proposition does not tell us where the vortex lines lie because
one still needs a method for identifying a point on the line.  
To find starting points for a streamline function, the first method is used to
identify points on the boundaries where $\rho\approx0$.  

Potentially there could 
have been difficulties near the time and position of reconnection because both 
$\rho\approx0$ and $\nabla\psi_{r,i}$ are small, perhaps too small for the 
identifying the positions of neigbouring lines with $\rho\approx0$.  
In practice, this has not been a problem.

Once the lines have been found, the derivatives along their trajectories
of their three-dimensional positions can be determined, and from those
derivatives the local curvature, Frenet-Serret coordinate frames and possibly
the local motion of the lines can be found.
Properties that could be compared to the predictions of vortex filament models.  

To analyse these properties, the following alternative definition of the 
pseudo-vorticity is useful.

\begin{corollary}\label{th:omegarhophi} $\hat{\bomega}_\rho=\hat{\bomega}_\psi$ where
$\hat{\bomega}_\rho=\nabla\rho\times\nabla\phi/|\nabla\rho\times\nabla\phi|$ 
\end{corollary}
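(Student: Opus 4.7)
The plan is to show a stronger pointwise identity: wherever $\rho>0$, we have $\nabla\psi_r\times\nabla\psi_i=\tfrac12\nabla\rho\times\nabla\phi$, so that $\bomega_\psi=\tfrac14\nabla\rho\times\nabla\phi$. Since the two vectors differ by a positive scalar multiple, their unit vectors coincide, giving $\hat{\bomega}_\psi=\hat{\bomega}_\rho$.

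To carry this out, I would first invoke the Madelung representation $\psi=\sqrt{\rho}\,e^{i\phi}$, so that $\psi_r=\sqrt{\rho}\cos\phi$ and $\psi_i=\sqrt{\rho}\sin\phi$. Applying the chain rule (valid wherever $\rho>0$, where $\phi$ is locally smooth modulo $2\pi$) gives
\begin{equation*}
\nabla\psi_r=\frac{\cos\phi}{2\sqrt{\rho}}\nabla\rho-\sqrt{\rho}\sin\phi\,\nabla\phi,\qquad
\nabla\psi_i=\frac{\sin\phi}{2\sqrt{\rho}}\nabla\rho+\sqrt{\rho}\cos\phi\,\nabla\phi.
\end{equation*}
Forming the cross product and discarding the $\nabla\rho\times\nabla\rho$ and $\nabla\phi\times\nabla\phi$ terms, the remaining coefficient of $\nabla\rho\times\nabla\phi$ collapses to $\tfrac12(\cos^2\phi+\sin^2\phi)=\tfrac12$, yielding the claimed identity. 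Dividing by the magnitude then gives $\hat{\bomega}_\psi=\hat{\bomega}_\rho$.

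The main subtlety, and the place I would be most careful, is that the corollary ultimately concerns the direction of the vortex line, where by definition $\rho=0$ and $\phi$ is singular. The identity above only makes sense in the punctured neighbourhood $\rho>0$. I would therefore present Corollary \ref{th:omegarhophi} as an identity of unit vectors on the set $\{\rho>0\}\cap\{|\nabla\rho\times\nabla\phi|>0\}$, and then appeal to Proposition \ref{th:omegarho} (which shows that $\hat{\bomega}_\psi$ extends continuously to $\rho=0$ and equals the tangent of the vortex line there) to conclude that $\hat{\bomega}_\rho$ has the same continuous extension. In practice this means the equality holds in the limit as one approaches the core along any transverse direction, which is exactly what is needed for the tracking algorithm based on \eqref{eq:pseudov}.
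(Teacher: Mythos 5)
Your proof follows essentially the same route as the paper's: write $\psi_r=\sqrt{\rho}\cos\phi$, $\psi_i=\sqrt{\rho}\sin\phi$, expand the cross product, discard the parallel terms, and observe that the surviving coefficient of $\nabla\rho\times\nabla\phi$ is a positive constant, so the unit vectors agree. Your version is in fact slightly more careful than the paper's on two points the paper glosses over --- the numerical prefactor (your $\tfrac12\nabla\rho\times\nabla\phi$ is the correct coefficient, whereas the paper's intermediate factors do not quite track) and the fact that the identity lives on $\{\rho>0\}$ and reaches the core only by continuous extension via Proposition~\ref{th:omegarho}.
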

\begin{proof}
Start with $\psi_r=\sqrt{\rho}\cos\phi$ and 
$\psi_i=\sqrt{\rho}\sin\phi$. 
\ITM\item[] Expand: $0.5\nabla\psi_r\times\nabla\psi_i=
[\nabla\sqrt{\rho}\cos\phi-\sqrt{\rho}\sin\phi\nabla\phi]\times[\nabla\sqrt{\rho}\sin\phi+\sqrt{\rho}\cos\phi\nabla\phi]$. 
\item[] Remove all $\psi_r$ and $\psi_i$ terms sharing the same gradient to reduce 
this to 
$$2(\nabla\sqrt{\rho}\times\nabla\phi\sqrt{\rho}\cos^2\phi-\nabla\phi\times\nabla\sqrt{\rho}\sqrt{\rho}\sin^2\phi)$$ 
\item[] Finally, use $\nabla\sqrt{\rho}=\nabla\rho/(2\sqrt{\rho})$ to
get $\bomega_{\psi}=\nabla\rho\times\nabla\phi$. 
\ITN
\end{proof}

Do these lines follow the cores of $\rho\equiv0$?  One test is to interpolate
the densities from the Cartesian mesh to the vortex lines. The result is that
these densities are very small, but not exactly zero.  Another test is 
simultaneously plot the pseudo-vorticity lines along with very low isosurfaces of
density, examples of which is given in figure \ref{fig:antipstreamrhoT0T4}
and figure \ref{fig:orthoazel}.
The centres of the isosurfaces and the lines are almost indistinguishable.

Using the next proposition, the motion of the $\rho=0$ lines given by the
time derivative of $\psi$ can be written exactly using just the gradients and 
Laplacians of the wavefunction $\psi$.
This will will be used in a later paper.

\begin{proposition}\label{th:rho0v}
 The motion of the vortex line is given by the coupled set of equations 
 
 $$\begin{array}{rcl}\medskip (\nabla\psi_r\times\nabla\psi_i)\cdot\dddt\bx(s,t) 
& = & 0\\ \medskip
-\nabla\psi_r\cdot\dddt\bx(s,t) & = & -0.5\Delta\psi_i \\
-\nabla\psi_i\cdot\dddt\bx(s,t) & = & 0.5\Delta\psi_r \end{array}$$.

The solution of which is
\EQL{eq:rho0v} \dddt\bx(s,t)=-0.5\frac{\Delta\psi_i(\bomega_\psi\times\nabla\psi_i)
+\Delta\psi_r(\bomega_\psi\times\nabla\psi_i)}{\bomega_\psi^2} \EN

where pseudovorticity $\bomega_\psi:=\nabla\psi_r\times\nabla\psi_i$
\end{proposition}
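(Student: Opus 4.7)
The plan is to start from the defining relations $\psi_r(\bx(s,t),t)\equiv 0$ and $\psi_i(\bx(s,t),t)\equiv 0$, differentiate in $t$ by the chain rule, and combine the result with the evolution equation for $\psi$ on the core. This immediately gives
\[
\nabla\psi_r\cdot\dddt\bx = -\partial_t\psi_r,\qquad
\nabla\psi_i\cdot\dddt\bx = -\partial_t\psi_i,
\]
which are two scalar constraints on the three components of $\dddt\bx$.

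Next I would evaluate $\partial_t\psi$ on the vortex line using \eqref{eq:GP}. Because $\psi=0$ on the core the entire nonlinear term $V\psi(1-|\psi|^2)$ drops out and $\partial_t\psi = 0.5\,i\Delta\psi = -0.5\Delta\psi_i + 0.5\,i\Delta\psi_r$; separating real and imaginary parts reproduces the second and third equations of the stated system. The first equation $\bomega_\psi\cdot\dddt\bx = 0$ is not a consequence of the chain rule but a gauge choice: a component of $\dddt\bx$ parallel to the line merely reparameterises the line in $s$, so the physically meaningful motion is the projection onto the normal plane spanned by $\nabla\psi_r$ and $\nabla\psi_i$, and imposing orthogonality to $\bomega_\psi$ fixes that ambiguity.

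Finally I would solve the resulting $3\times3$ linear system. Because $\bomega_\psi=\nabla\psi_r\times\nabla\psi_i$ is orthogonal to each of its factors and nonzero away from reconnection, the triple $\{\nabla\psi_r,\nabla\psi_i,\bomega_\psi\}$ is a basis; writing $\dddt\bx=\alpha\nabla\psi_r+\beta\nabla\psi_i+\gamma\bomega_\psi$, the orthogonality constraint forces $\gamma=0$, and the remaining $2\times2$ system for $(\alpha,\beta)$ has determinant $|\nabla\psi_r|^2|\nabla\psi_i|^2-(\nabla\psi_r\cdot\nabla\psi_i)^2=|\bomega_\psi|^2$ by Lagrange's identity. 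Cramer's rule then gives $\alpha$ and $\beta$ as linear combinations of $\Delta\psi_r$ and $\Delta\psi_i$ with denominator $|\bomega_\psi|^2$, and a single application of the BAC--CAB identity to rewrite $\bomega_\psi\times\nabla\psi_r$ and $\bomega_\psi\times\nabla\psi_i$ repackages $\alpha\nabla\psi_r+\beta\nabla\psi_i$ into the closed form \eqref{eq:rho0v}. The only genuine obstacle is conceptual, namely articulating the gauge choice so that the $3\times3$ system is well posed; computationally the only place to be careful is the denominator $|\bomega_\psi|^2$, which vanishes precisely at the reconnection event and so delimits the natural domain of validity of the formula.
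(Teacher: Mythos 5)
Your proposal is correct and follows essentially the same route as the paper: the chain rule applied to $\psi_r(\bx(s,t),t)=\psi_i(\bx(s,t),t)=0$ is just a cleaner phrasing of the paper's first-order Taylor expansion about the curve, the vanishing of the nonlinear term on the core gives the same $\partial_t\psi_r=-0.5\Delta\psi_i$, $\partial_t\psi_i=0.5\Delta\psi_r$, and the orthogonality constraint $\bomega_\psi\cdot\dddt\bx=0$ plays the same closure role (you articulate it as a gauge fixing of the tangential reparameterisation freedom, which is sharper than the paper's remark that only the perpendicular motion is of interest). Where you add value is that the paper stops at ``one gets the required three coupled equations'' and never actually inverts the $3\times3$ system; your Lagrange-identity/Cramer computation completes that step, and if you carry it through you will find that the displayed solution \eqref{eq:rho0v} contains a typo --- the second term in the numerator should be $\Delta\psi_r\,(\bomega_\psi\times\nabla\psi_r)$, not $\Delta\psi_r\,(\bomega_\psi\times\nabla\psi_i)$, as one sees directly from $\nabla\psi_i\cdot(\bomega_\psi\times\nabla\psi_r)=|\bomega_\psi|^2$ and $\nabla\psi_i\cdot(\bomega_\psi\times\nabla\psi_i)=0$. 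Your closing remark that the formula degenerates where $|\bomega_\psi|^2\to0$, i.e.\ precisely at reconnection, is also a point the paper only gestures at.
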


\begin{proof}
We already know that the trajectory of the vortex lines is defined by the
pseudovorticity $\bomega_\psi=\nabla\psi_r\times\nabla\psi_i$ from proposition above. 

Since the
density remains zero along this line, the motion we are interested in is 
perpendicular to this direction.

On the $\rho=\psi_r^2+\psi_i^2\equiv0$ lines the time derivatives of 
$\psi_{r,i}$ are:
$$\ppt\psi_r=-0.5\Delta\psi_i $$
$$\ppt\psi_i=0.5\Delta\psi_r $$

Next we can Taylor expand to first order $\psi_{r,i}$ about the parameterised curve $\bx(s,t)$.
$$\psi_r=(\nabla\psi_r)(\bx-\bx(s,t))\quad{\rm and}\quad 
\psi_i=(\nabla\psi_i)(\bx-\bx(s,t)) $$
and their time-derivatives again to first order are
\begin{align*}
 \ppt\psi_r&=(\nabla\ppt\psi_r)(\bx-\bx(s,t))-\nabla\psi_r\ddt\bx(s,t)
\approx -\nabla\psi_r\ddt\bx(s,t)\\ 
\ppt\psi_i&=(\nabla\ppt\psi_i)(\bx-\bx(s,t))-\nabla\psi_i\ddt\bx(s,t)
\approx -\nabla\psi_i\ddt\bx(s,t)
\end{align*}

By adding that the motion will be perpendicular to the vortex (i.e. the
pseudovorticity $\nabla\psi_r\times\nabla\psi_i$) to the two
time derivative equations, one gets the required three coupled equations.

\end{proof}

\subsection{Curvature obtained from the $\omega_\psi$ lines}\label{sec:curve}

The curvature of the lines identified by the pseudo-vorticity algorithm will be found
by applying the Frenet-Serret relations to derivatives of the trajectories $\br(s)$ of the
vortex lines. 

\paragraph{Definition 3.1} {\it The Frenet-Serret frame for any smooth curve 
$\br(s):[0,1]\rightarrow\bbR^3$ has an orthonormal triple of unit vectors 
$(\bT,\bN,\bB)$ at each point $\br(s)$ where $\bT(s)$ is the  tangent, $\bN(s)$ 
is the normal and $\bB(s)$ is the binormal. The following relations between $(\bT,\bN,\bB)$ 
define the curvature $\kappa$ and torsion $\tau$. 
\begin{subequations}\label{eq:FrenetS}
\begin{align}\medskip \bT(s) & =\p_s\br(s) \label{tangent} \\
\medskip \p_s\bT & =\kappa \bN \label{eq:normal} \\
\medskip \p_s\bN & =\tau\bB-\kappa\bT \label{eq:gradN} \\
\p_s\bB & =-\tau\bN \label{eq:gradB}
\end{align}\end{subequations}
}

The numerical algorithm for calculating the curvature and normal uses the 
function {\it gradient} in Matlab twice. 
That is, first $\br_{,s}$ and then $\br_{,ss}$ are generated. Next, normalising $\br_{,s}$ 
gives the tangent vector $\bT$, the direction vector between points on the vortex lines.
Finally, the derivative of $\bT$ gives us both the curvature, $\kappa=|\p_s\bT|$
and the normal $\bN=\p_s\bT/\kappa$. In practice it is better to calculate the 
curvature using:
\EQL{eq:curvature} \kappa = |\br_{,s}\times\br_{,ss}|/|\br_{,s}|^3 \EN
\section{Orthogonal reconnection: New scaling laws and their geometry \label{sec:orthoresults}} 

The goals of this section are to  to apply the pseudo-vorticity algorithm \eqref{eq:pseudov} to the evolution of the initially orthogonal vortex lines and use these positions to demonstrate that the separation scaling laws for the originally orthogal vortices deviate strongly from the mean-field prediction for all initial separations and for all times.

The major points to be demonstrated for the orthogonal calculations are:
\ITM\item For strictly orthogonal initial vortices, there is just one reconnection and 
loops do not form out of the post-reconnection vortices in figure \ref{fig:orthoazel}.
\item The sub-Berloff profiles are crucial for obtaining temporal evolution that is smooth 
enough to allow clear scaling laws for the pre- and post-reconnection separtations
to be determined \citep{Rorai12}.
\item The separation scaling laws before and after reconnection are the same for each 
case, with these surprising results. Before reconnection 
$\delta_{in}\sim (t_r-t)^{1/3}$, which is slower than the dimensional scaling 
\eqref{eq:deltat}. And after reconnection $\delta_{in}\sim (t_r-t)^{2/3}$, faster than 
the dimensional scaling \eqref{eq:deltat}. 
\item This non-dimensional scaling arises as soon as the vorticity tangent vectors  
at their closest points are anti-parallel and the alignment of the averaged 
Frenet-Serret frames at these points with respect to the separation vector are
respectively orthogonal, parallel and orthogonal for the averaged tangent, curvature and 
bi-normal.
\item Reconnection occurs in the {\it reconnection or osculating plane} defined by 
the vorticity and curvature vectors at $t=t_r$ and is for all times approximately 
the plane defined by the average vorticity and curvature vectors of the two vortices 
at the points of closest approach. 
\item Angles taken between the reconnection event and the larger scale structure 
are convex, not concave or acute, which could be the source of the non-dimensional
separation scaling laws.
\ITN
\subsection{Approach and separation}

The steps used to determine the separation scaling laws are these:
\ITM\item First, identify the trajectories of the vortex lines with the 
pseudo-vorticity plus Matlab streamline algorithm. At any given time, 
both before and after reconnection:
\ITM\item The vortex originating on the $y=0$ plane will be the $y$-vortex. 
\item The vortex originating on the $z=0$ plane will be the $z$-vortex.  \ITN
\item Identify the points, $\bx_y$ and $\bx_z$, of minimum separation between 
the two vortex lines, defined as $\delta_{yz}(t)=|\bx_y-\bx_z|$
\ITM\item and identify approximate reconnection times $\tilde{t}_r(\delta_0)$ 
when $\delta_{yz}(t)$ was minimal.
\item This generates $\delta_{yz}$ versus $t-t_r$ curves such as those in the inset of 
figure\ref{fig:orthoseparations}. \ITN
\item Once it was clear that neither the incoming nor outgoing separations obeyed 
the dimensional expectation \eqref{eq:deltat},
several alternative scaling laws were applied to the separations. Only the
1/3 incoming power law and outgoing 2/3 law working well for every case. 
\item By using these scaling laws to make the approach and separation linear, refined estimates 
of $\tilde{t}_r(\delta_0)$ can then be made. 
\ITM\item That is, cube the $\delta$ separations for $t<\tilde{t}_r(\delta_0)$. 
\item And take the 3/2 power of $\delta$ for $t>\tilde{t}_r(\delta_0)$. 
\item Then extrapolate these linear fits to the times when $\delta=0$. 
\item For all the initial $\delta_0$, the $t<t_r$ and $t>t_r$ estimates of $t_r$ were nearly identical.
\ITN
\item The combined results give the fit $t_r=(0.67\delta_0+0.064)^3$ and
are shown in figure \ref{fig:delta_trecon}. 
\item Using these $t_r(\delta_0)$, figure \ref{fig:orthoseparations} compares 
the scaled pre- and post-reconnection separations $\delta(t)$ for all 5 cases to demonstrate 
as in \cite{Rorai12} that:
\ITM
\item $\delta_{in}\sim |t_r-t|^{1/3}$ for $t<t_r(\delta_0)$ and 
$\delta_{out}\sim |t-t_r|^{2/3}$ for $t>t_r(\delta_0)$.
\item Note the inset which uses the dimensional scaling $\delta_{yz}^2$ versus time 
to illustrate the differences between the new scaling laws and the dimensional prediction.
\ITN
\ITN

\BFG\bminil{0.5}

\includegraphics[scale=.31]{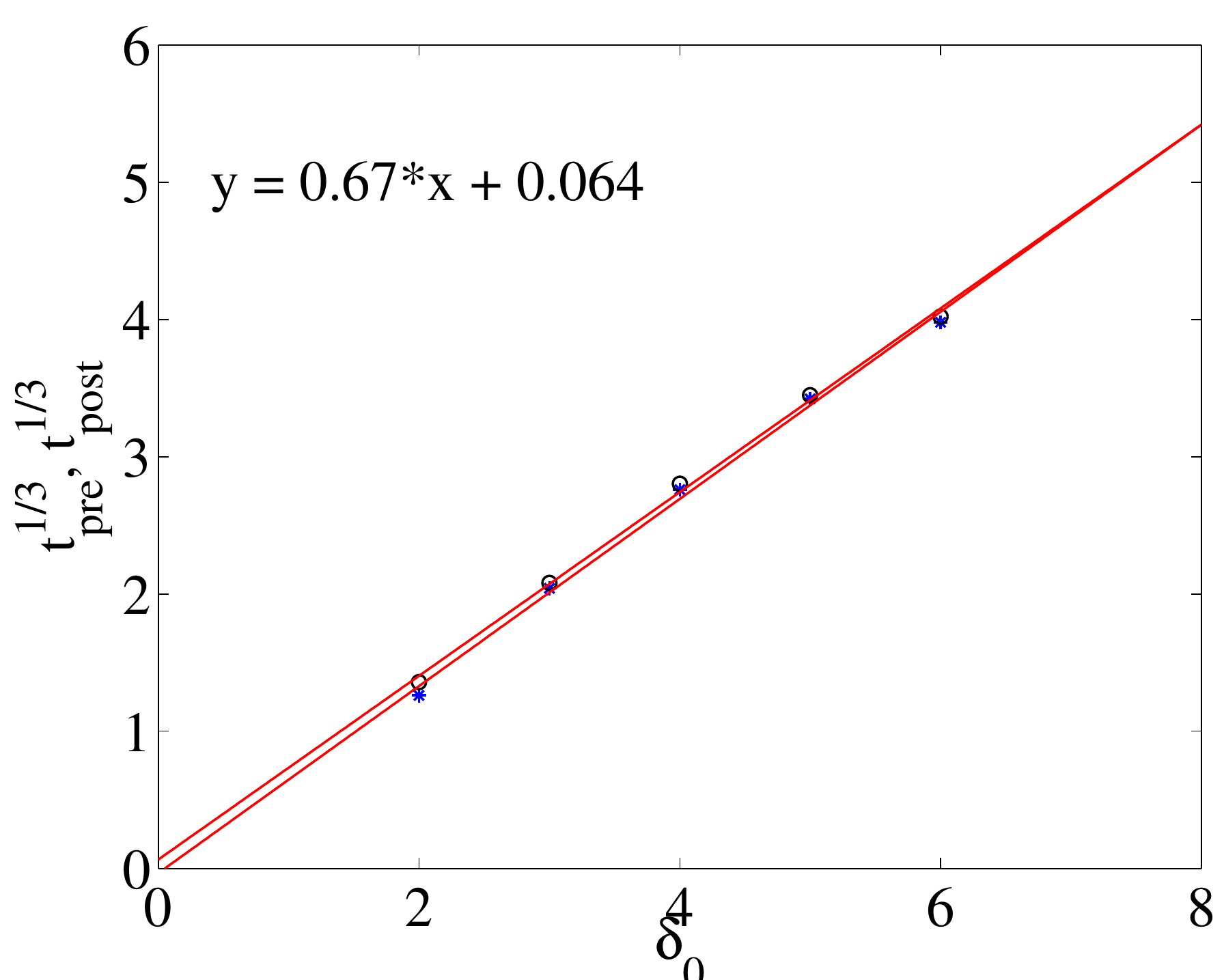} 
\emini~~~~\bminir{0.42} \caption{\label{fig:delta_trecon} 
Reconnection times as a function of the initial separation.  Reconnection times 
are estimated by using times immediately before and after reconnection plus 
the empirical 1/3 and 2/3 scaling laws.}
\emini
\EFG
\BFG
\bminil{0.5}
\includegraphics[scale=.31]{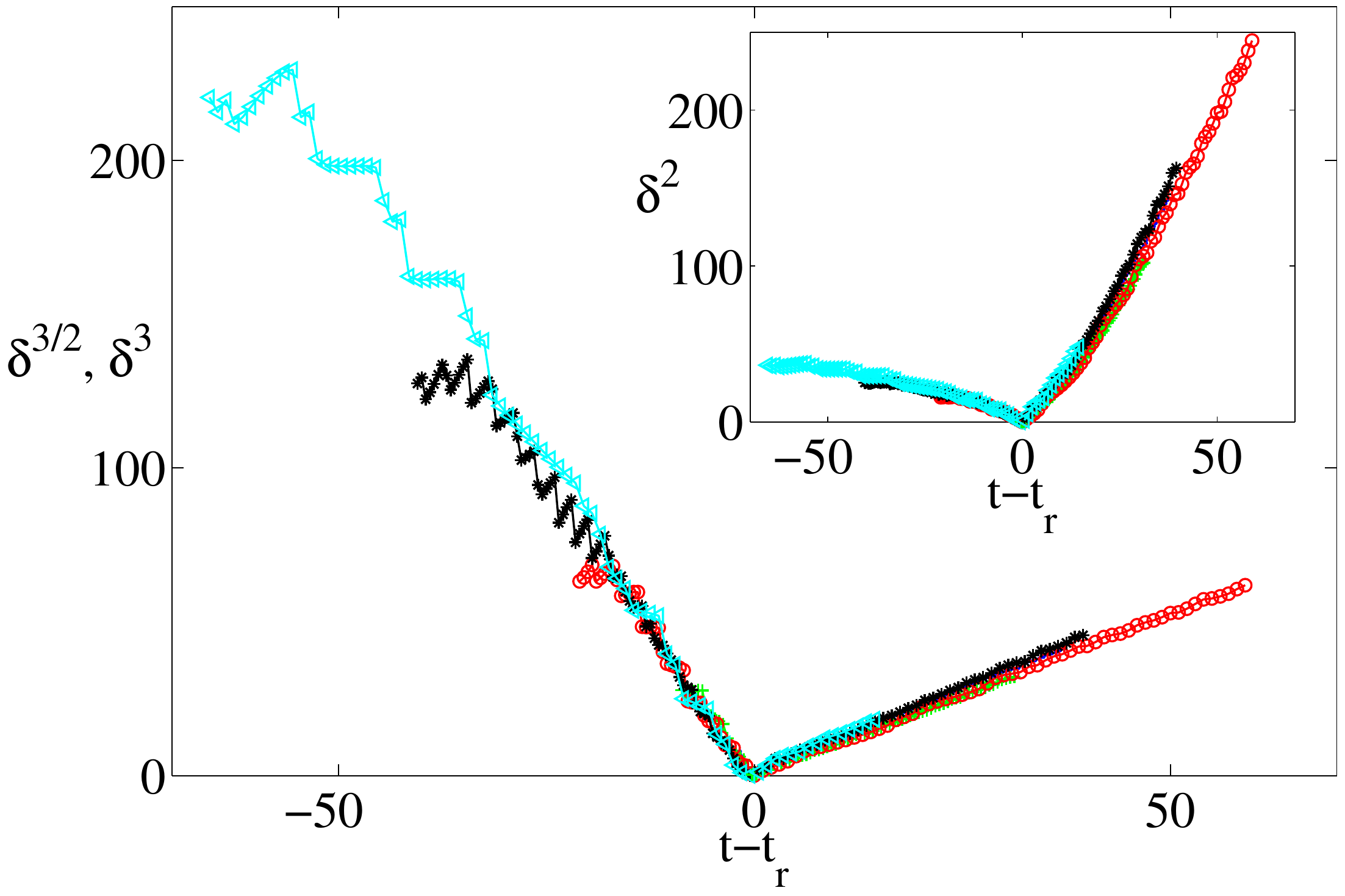} 
\emini~~~~\bminir{0.42} 
\caption{\label{fig:orthoseparations}
Pre- and post-reconnection separations for four cases: $\delta_0= 2, 3,4,5,6$
for calculations with $256^3$ points.  $\textcolor{blue}{\bf \bullet}$: $\delta_0=2$, $\textcolor{SeaGreen}{\bf +}$: $\delta_0=3$,
$\textcolor{red}{\bigcirc}$: $\delta_0=4$, ${\bf \divideontimes}$: $\delta_0=5$,
$\textcolor{cyan}{\triangle}$: $\delta_0=6$.
Pre-reconnection distances are raised to the power 3, while post-reconnection distances are raised to the power 3/2. This scaling is visibly better than the dimensional prediction ($\delta^2$ versus time to reconnection) shown in the inset. 
The scaled separations for $t<t_r$ fluctuate more strongly than the $t>t_r$ scaled 
separations.}
\emini\EFG

\subsection{Sub-Berloff profile. \label{sec:subBerloff}} 
Why was it necessary to use the sub-Berloff profile? 
That is, could a different profile give similar separation collapse 
for all the $\delta_0$ cases and obtain clear scaling laws, as in 
figure \ref{fig:orthoseparations}? To show the benefits of the sub-Berloff
profile, tests were done using all of the known Pad\'e approximate profiles 
of steady-state two-dimensional quantum vortices in an infinite domain, including
tests with and without adding the mirror images.
This included, the true Berloff profile, that is (\ref{eq:initBerloff}) with 
$c_2=d_2=0.02864$, the low-order $2\times2$ Pad\'e approximate of the \cite{Fetter69},
(\ref{eq:initBerloff}) with $d_1=1$ and $c_2=d_2=0$, and a
variety of $3\times3$ Pad\'e approximates from \cite{Berloff04} and most recently 
\cite{Roraietal13}, solutions that are very close to the ideal diffusive solution.  
All gave roughly the same oscillations in the approach and separation curves as in
\cite{Zuccheretal2012} and only a hint of the clear scaling laws in figure 
\ref{fig:orthoseparations}.  Only the sub-Berloff profile with at least some of the 
mirror images worked.  Further work will be needed to identify why instabilities generated
on the vortex lines are either suppressed by the sub-Berloff profile, or absorbed by it.

\subsection{Evolution of the orthogonal geometry during reconnection\label{sec:orthogeo}}
\BFG
\bminil{0.5}
\includegraphics[scale=.20]{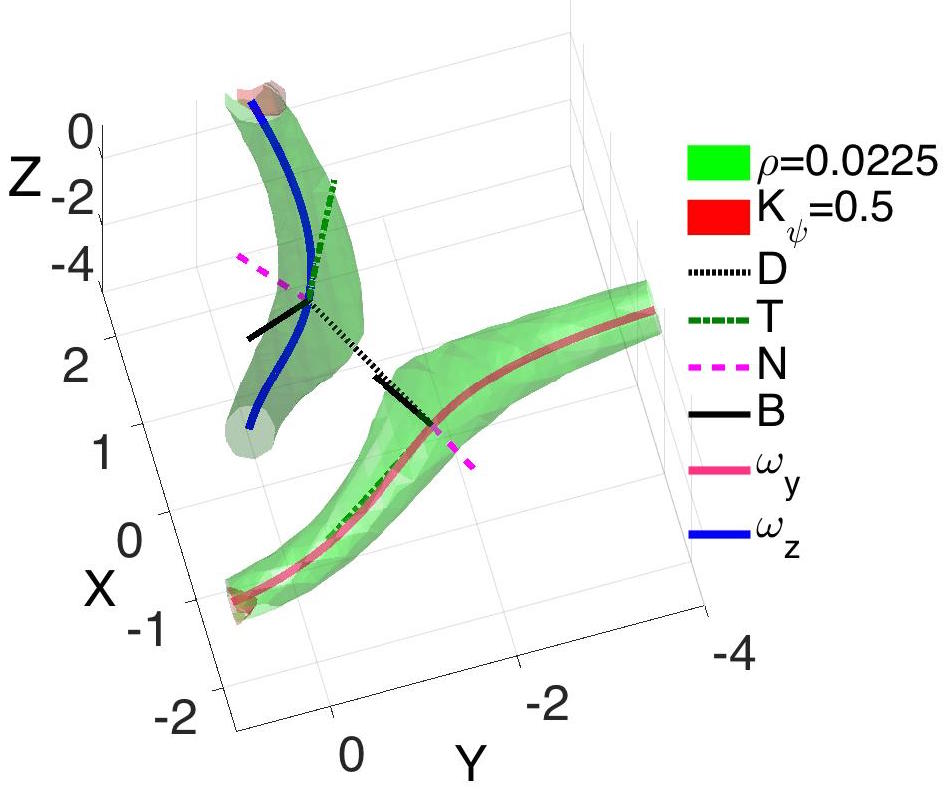}
\begin{picture}(0,0)
\put(-160,156){$t=6$}
\put(-180,156){\large a)}
\end{picture}\\
\includegraphics[scale=.20]{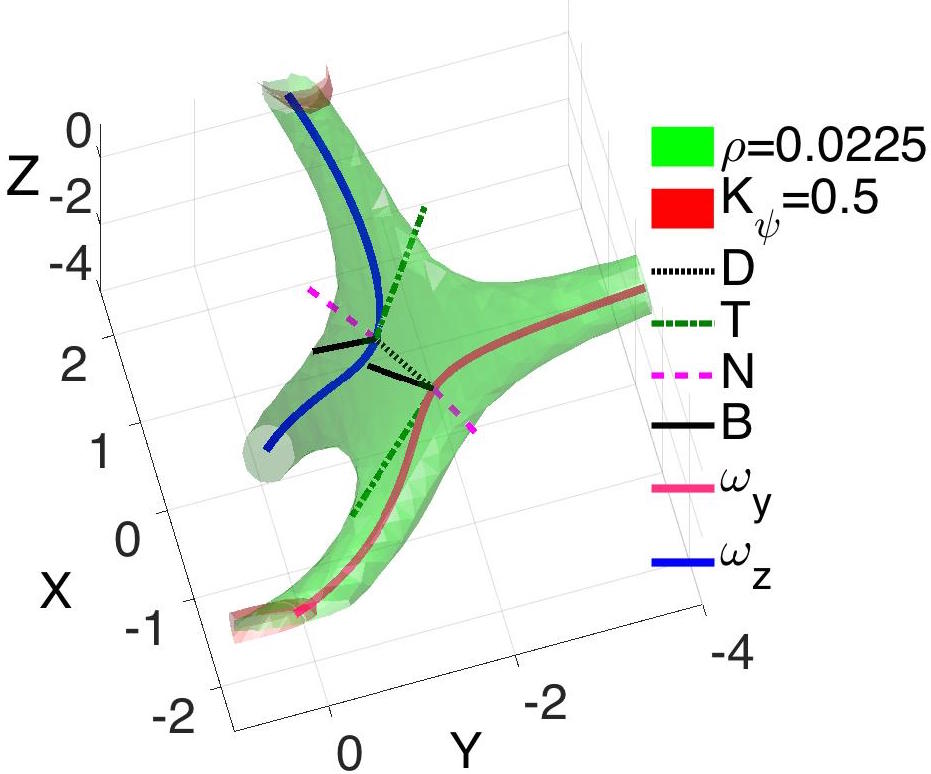}
\begin{picture}(0,0)
\put(-160,150){$t=8.75\approx t_r$}
\put(-180,150){\large b)}
\end{picture}\\
\includegraphics[scale=.20]{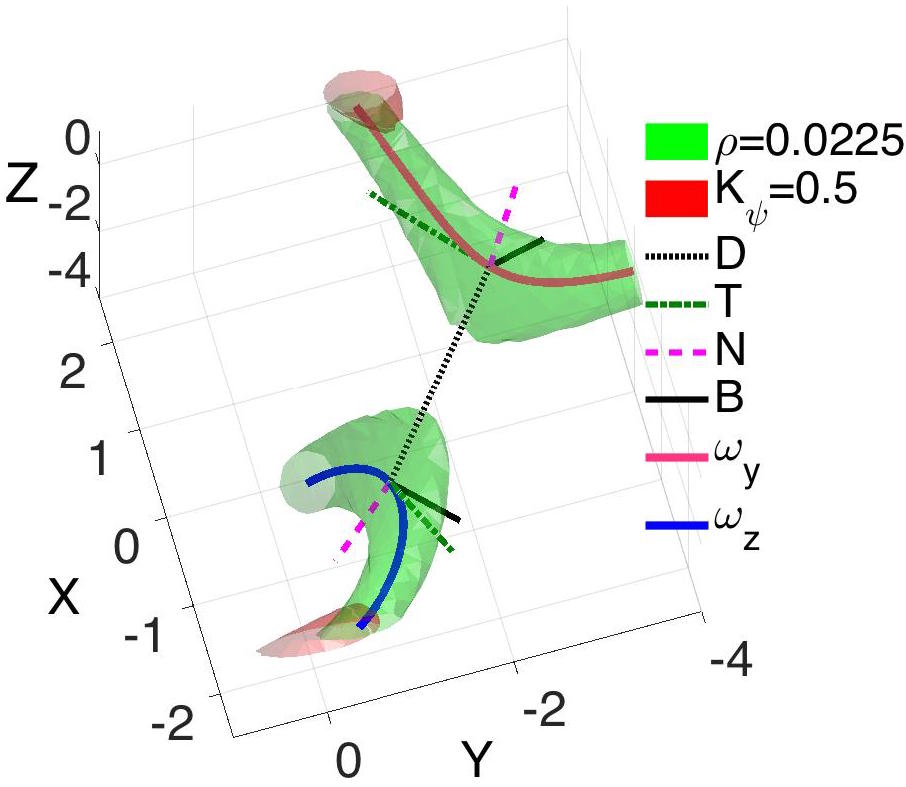}
\begin{picture}(0,0)
\put(-160,150){$t = 12$}
\put(-180,150){\large c)}
\end{picture}
\caption{\label{fig:orthoT6-12} 
Isosurfaces, quantum vortex lines and orientation vectors in three-dimensions 
for the $\delta_0=3$, $256^3$ calculation at $t=6$ (\ref{fig:orthoT6-12}a), 
$t=8.75$  (\ref{fig:orthoT6-12}b) and $t=12$  (\ref{fig:orthoT6-12}c) where
$t_r\approx8.9$. Lines and vectors at the closest points on each vortex are the
separation $D$ and Frenet-Serret vectors \eqref{eq:FrenetS}: $\bT$, $\bN$ and $\bB$.
}
\emini~~~~\bminic{0.5}\vspace{5mm}
\includegraphics[scale=.180]{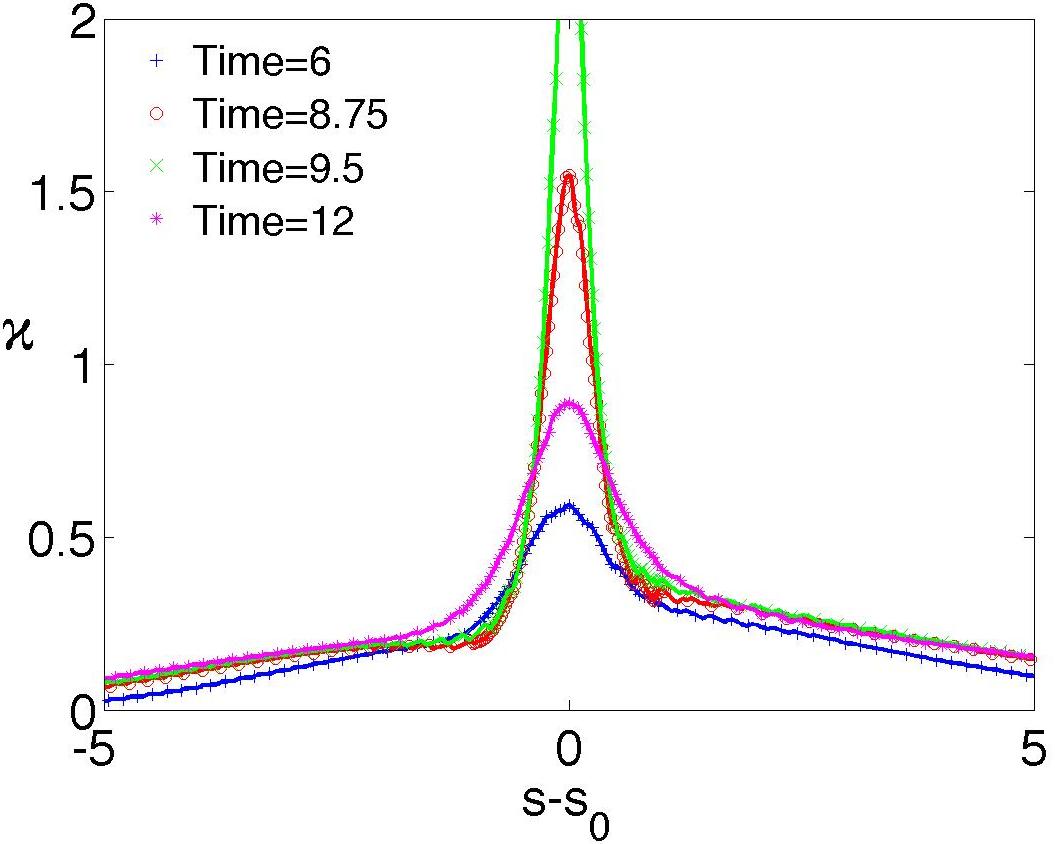}
\begin{picture}(0,0)
\put(30,100){\large a)}
\end{picture}\\ ~~~~ \\ ~~~~ \\ ~~~~ \\
\includegraphics[scale=.180]{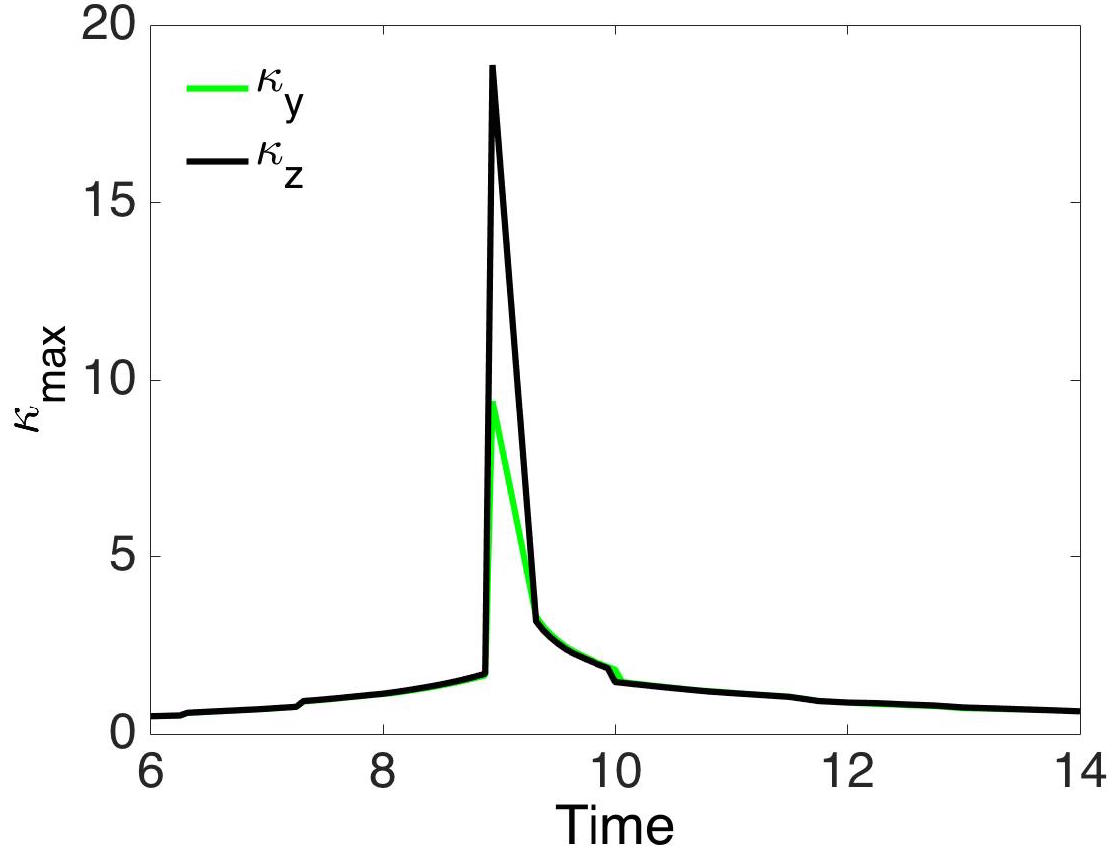}
\begin{picture}(0,0)
\put(40,100){\large b)}
\end{picture}\\ ~~~~ \\ ~~~~ \\ ~~~~ \\
\caption{\label{fig:orthocurv} 
Curvature of the vortex lines is given only for the $\delta_0=3$ $256^3$ calculation
with $t_r\approx 8.9$ as the curves about $t_r(\delta_0)$ are similar for all $\delta_0$.
(a) Curvatures against arclength $s$ at four times: $t=6$, 8.75, 9.5 and 12.
The profiles for the other line are similar, with their maximum peaks at 
$t=9.5$ both off the plotted scale. 
(b) Time evolution of the maximum curvatures of the two lines. For each line there is
a sudden jump to the large post-reconnection maximum at $t=t_r$.
}
\emini
\EFG


The three-dimensional evolution of the vortices is illustrated in figures 
\ref{fig:orthoazel}, \ref{fig:orthoNP} and \ref{fig:orthoT6-12} and the alignments at or 
near reconnection are illustrated with two sketches taken from different perspectives in
figures \ref{fig:yzplane} and \ref{fig:NPplane}. 
The purpose of the sketches is to emphasize the strong qualitative differences between 
the orthogonal reconnection's skew-symmetric alignment and the anti-parallel case with 
its planar symmetries. Quantitative alignments are then given in figures \ref{fig:TNBD}, 
\ref{fig:D-TNB} and \ref{fig:angles2D}.
This initial discussion is
divided into three parts. First, the choice of three-dimensional images. Second,
the role of the sketches. Third, how to use your fingers to put the pieces together into
a mental picture.

\AD{The evolution of the three-dimensional structure is illustrated with the help of two
sketches in figures \ref{fig:yzplane} and \ref{fig:NPplane} and the perspectives 
in figures \ref{fig:orthoazel}, \ref{fig:orthoNP} and \ref{fig:orthoT6-12}.
The goal is to emphasize the differences with the anti-parallel case, give us some clues for why the separation laws are different and provide the reader with a mental picture of the evolution of a system that is skew-symmetric and harder to visualise than the anti-parallel case with its planar symmetries.}

{\bf Choice of 3D images.}
Section \ref{sec:profiles} uses figures \ref{fig:orthoazel} and \ref{fig:orthoNP}
to illustrate the global changes in structure from two perspectives.  One is a general 
perspective and the other is the Nazarenko perspective that shows the 
symmetries.  Each figure has a $t=6$ frame, long before the reconnection at $t_r=8.9$,
and a frame at $t=14$, long after reconnection.

Figure \ref{fig:orthoT6-12} focuses upon the reconnection zone using three times:
$t=6$ is at the beginning of reconnection, $t=8.75$ is just before the reconnection time 
of $t_r\approx8.9$ and $t=12$ shows the end of reconnection.  Over this period 
the $\rho=0.05$ isosurfaces change slowly while 
the $\rho\equiv0$ pseudo-vorticity lines within them move rapidly towards one another. 
The Frenet-Serret frames around the points of closest 
approach are discussed in subsections \ref{sec:orthocurve} and \ref{sec:orthoorient}. 

\BFG
\bminil{0.6}
\includegraphics[scale=.22]{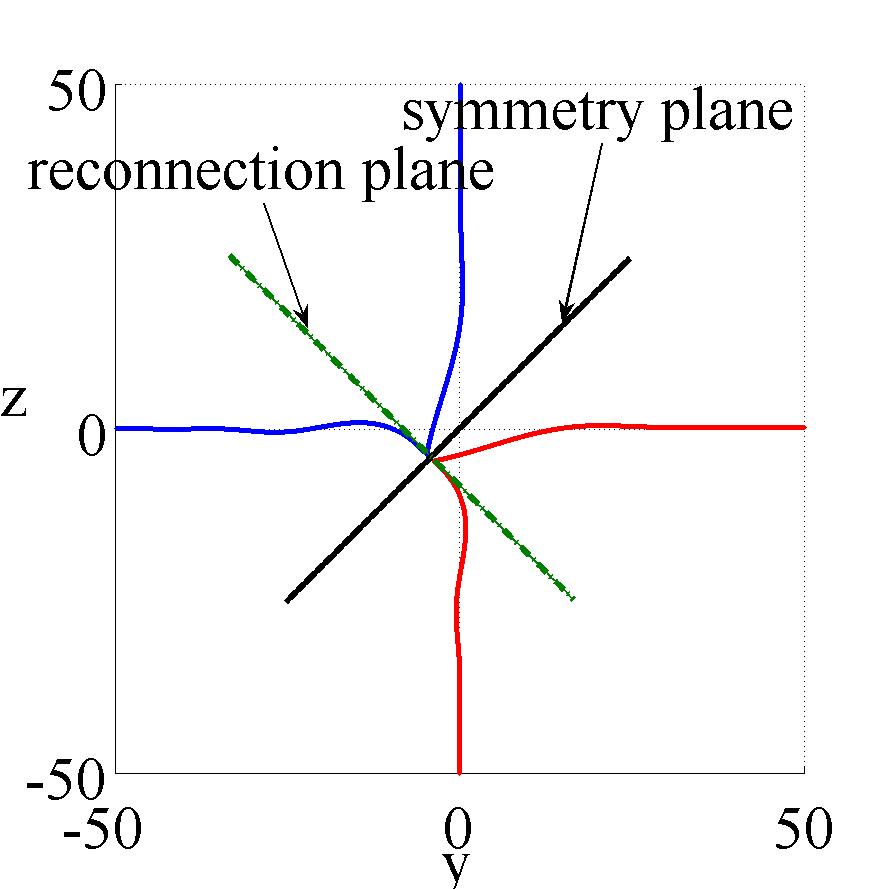}
\caption{\label{fig:yzplane} 
Sketch of the orthogonal reconnection around the reconnection point ${\bf x}_{r}$ 
at the reconnection time $t_r$ of the vortices projected onto the $y-z$ plane. 
The pre-reconnection trajectories approximately follow 
the $y-y_{mid}=0$ and $z-y_{mid}=0$ axes and the post-reconnection trajectories are the
red and blue lines. Two projected planes are indicated in black.  The 
{\it reconnection plane} $(z-z_r)=-(y-y_r)$ contains the tangents 
to the vortex lines $\bT_{y,z}$ and their curvature vectors $\bN_{y,z}$ 
both before and after reconnection, as well at the separation vector between 
the closest points $\bx_{y,z}$.
The {\it propagation/symmetry plane} $(z-z_r)=(y-y_r)$ which contains the 
bi-normals $\bB_{y,z}$ and the {\it direction of propagation} of the mid-point
$\bx_r(t)$ between the closest points of the vortices: $\bx_y$ and $\bx_z$.
From this perspective the angles along the {\it reconnection plane} before 
reconnection are shallow, and those after reconnection are sharp.}
\emini~~~\bminir{0.35}
\includegraphics[scale=.40]{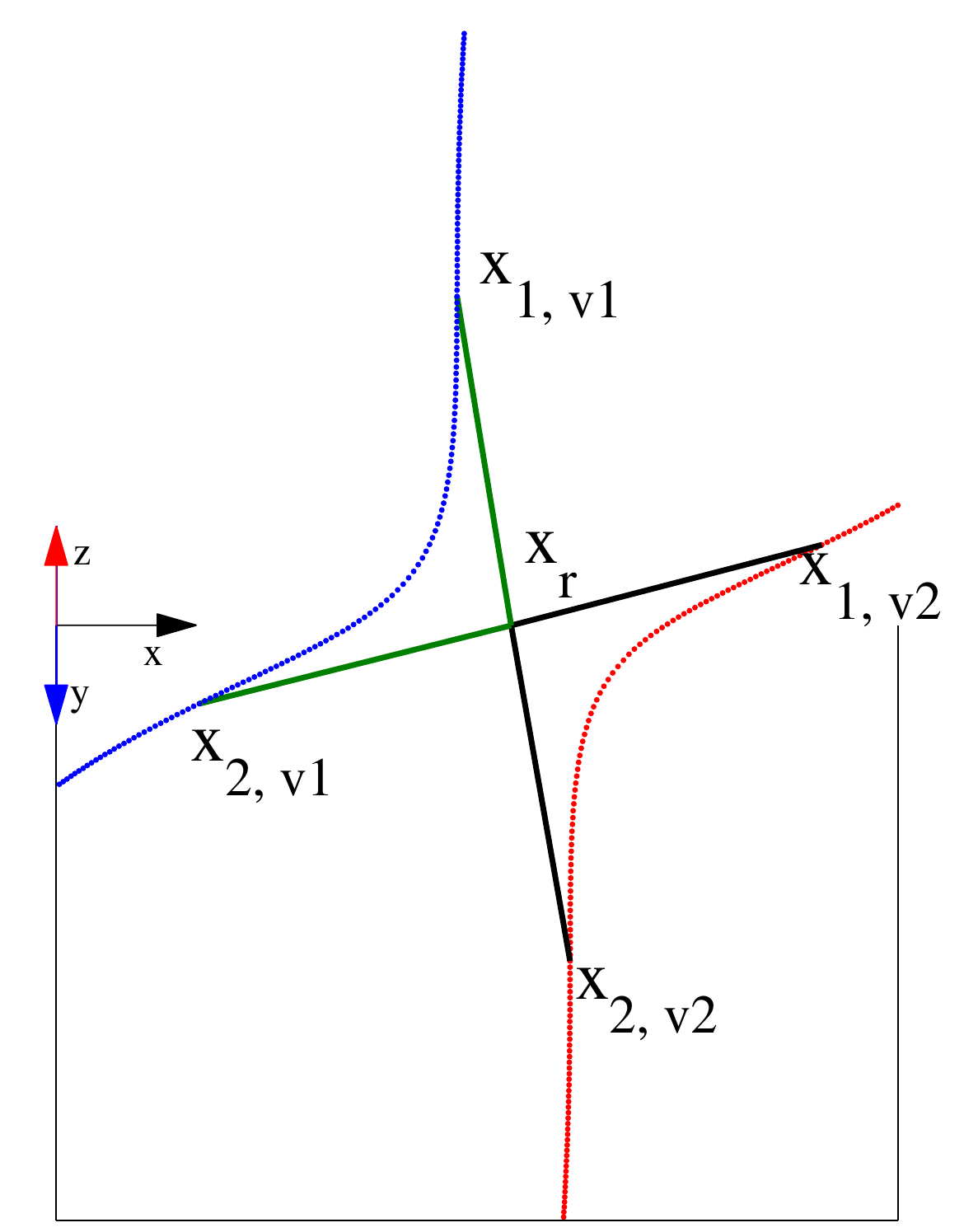}
\caption{\label{fig:NPplane}
Sketch using the Nazarenko perspective, that is looking down on the
{\it reconnection plane} along the {\it propagation/symmetry plane}, just before reconnection.
Four line segments (in three-dimensions) have been added 
that connect $\bx_r$, the mid-point between the vortices at closest approach, to
four points, each located $\Delta s$ units along the vortices from their respective
closest points.  Angles generated between adjacent segments as a function of 
$\Delta s$ are plotted in figure \ref{fig:angles2D}.  
}
\emini
\EFG

{\bf Sketches: 2D and 3D.} 
Figures \ref{fig:yzplane} and \ref{fig:NPplane}
provide two planar sketches at or near the reconnection time, with
the best reference point for each being
the mid-point between the closest points on the two vortices:
\EQL{eq:midpoint} \bx_r(\delta_0,t)=0.5(\bx_y(t)+\bx_z(t))\,. \EN
The sketch in figure \ref{fig:yzplane} projects 
the vortices at the reconnection time $t_r$ onto the $y-z$ plane around 
the point of reconnection: $\bx_r(\delta_0,t_r)$, along with projections of
two planes of interest, the {\it reconnection or osculating plane} and the
{\it propagation/symmetry plane}, which are defined in terms of the average 
Frenet-Serret basis vectors \eqref{eq:TNBav} in subsection \ref{sec:orthoorient}.
The second sketch in figure \ref{fig:NPplane} looks down the $x=0$, $y=z$
45$^\circ$ {\it direction of propagation} of $\bx_r(\delta_0,t)$ onto the
$\bT_{av}\times\bN_{av}$ \eqref{eq:TNBav} {\it reconnection plane}. 
Important features include:

\ITM\item Because figure \ref{fig:yzplane} is at $t=t_r$, the projections
of the planes and vortices all cross at $\bx=\bx_r(\delta_0,t_r)$, which
means that the red and blue curves trace both the pre- and post-reconnection
trajectories of the vortices, as follows:
\ITM
\item The trajectories before reconnection follow the curves 
parallel to the $y=0$ and $z=0$ axes that are half blue and half red. 
These are projections of the red and blue lines in figure \ref{fig:NPplane}.
\item The trajectories immediately after reconnection are indicated by the 
red and blue curves coming out of the {\it reconnection plane}.
\ITN
\item The two orthogonal lines through $\bx_r$ represent two planes:
\ITM\item The {\it reconnection plane}, defined by $\bT_{av}=0.5(\bT_y-\bT_z)$ and
$\bN_{av}=0.5(\bN_y-\bN_z)$ \eqref{eq:TNBav}. Before or after reconnection,
the separation vector $\bD=(\bx_z-\bx_y)/|\bx_z-\bx_y|\neq0$ is also in this plane. 
\ITM\item It is shown below that $\bT_{y,z}$ and $\bN_{y,z}$ swap at 
reconnection, so all of these basis vectors stay in this plane
after reconnection.\ITN
\item And the {\it propagation plane}, which contains the 
velocity of $\bx_r(\delta_0,t)$ and 
the average bi-normal $\bB_{av}=0.5(\bB_y+\bB_z)$ \eqref{eq:TNBav}. \ITN
\item The $x=0$, $y=z$ projection in figure \ref{fig:NPplane} 
is denoted the Nazarenko perspective or NP because it 
was used by linear model of \cite{NazarWest03}. 
\ITM\item That model tells us that $\bx_r=0.5(\bx_y+\bx_z)$ translates 
in the $(-y,-z)$ direction, motion that implies that the reconnection does not 
occur at the centre of the computational box. 
\item Note that for the points on either side of $\bx_r$, the tangents $\bT_{y,z}$ and 
curvature vectors $\bN_{y,z}$ are anti-parallel. The components that are not 
anti-parallel are directed out of the Nazarenko perspective.  
This also holds for the lines
across the central, green $\rho=0.05$ isosurface in figure \ref{fig:orthoT6-12}b. 
figure \ref{fig:TNBD}a shows how $\bT_y\cdot\bT_z$, $\bN_y\cdot\bN_z$ and 
$\bB_y\cdot\bB_z$ converge to this state as $t\rightarrow t_r$.
\item The lines drawn across the centre of  figure \ref{fig:NPplane} are used 
to determine the long-range angles discussed in section \ref{sec:orthoangles}.
\ITN
\item {\bf 3D by using fingers.} Cross your index fingers
while pointing their knuckles towards one another so they do not touch. 
\ITM\item Rotating this configuration reproduces figures 
\ref{fig:orthoazel}a, \ref{fig:orthoNP}a and \ref{fig:orthoT6-12}a.
\item Now move your fingers up, bending them as you do and bringing the
knuckles together.
\item This is how, the alignments of the Frenet-Serret frames at the points 
of closest approach in figure \ref{fig:TNBD} form. 
\item Now hold that configuration and rotate it to get the configurations of the 
sketches in figures \ref{fig:yzplane} and \ref{fig:NPplane} and the three-dimensional 
images in figures \ref{fig:orthoazel}b, \ref{fig:orthoNP}b and \ref{fig:orthoT6-12}b.
\ITN
\ITN

\subsection{Curvature \label{sec:orthocurve}}

Curvature, has played a central role in our understanding of quantum turbulence due to its 
use in predicting velocities in the law of Biot-Savart and the local induction 
approximation.  The connection between
these approximations for the velocities and the true dynamics of 
quantum fluids, as modeled by the Gross-Pitaevskii equations, would be in
how the gradient of the phase $\phi$ of the wavefunction $\psi$
is modified by the curvature of the vortex lines. 

In that context, could curvature profiles provide clues for 
the origins of the anomalous scaling exponents of the orthogonal separations?  
For example, if the local induction approximation is relevant, then a sudden increase in the
maximum curvature of the lines could explain the change in scaling.

To assess whether this is a possible explanation,
figure \ref{fig:orthocurv} plots profiles of the curvature along the vortex lines at several times,
and the curvature maxima $\kappa_{\rm max}$ as a function of time for the $\delta_0=3$ case. 
The other cases have similar
behaviour, including one slightly under-resolved $\delta_0=4$ case in a large $(32\pi)^3$ domain.

Two primary features should be noted. First, in figure \ref{fig:orthocurv}a there is 
very little asymmetry in $s$ about the points of cloest approach, both before and after reconnection. 
Second, in figure \ref{fig:orthocurv}b, there is some growth in $\kappa_{\rm max}$ for $t<t_r$, followed
by a sharp jump in $\kappa_{\rm max}$ at $t=t_r$, which then relaxes rapidly 
to the pre-reconnection values of $\kappa_{\rm max}$.

So there is some qualitative support for a local induction explanation change between the 
$\delta_{in}$ and $\delta_{out}$ scaling, but there is also inconsistency with the following:
\ITM\item Even pre-reconnection, the $\delta_{in}\sim (t_r-t)^{-1/3}$ scaling would require 
a stronger growth in $\kappa$ than is observed.
\item Post-reconnection, and after the curvature spike has relaxed, stronger curvatures than are
observed would be needed to maintain the $\delta_{out}\sim (t_r-t)^{-2/3}$.
\ITN

Therefore, one must conclude that a bigger picture is needed. Our proposal is to look at the 
alignments of their respective Frenet-Serret frames as another reason for the changes in scaling.
Looking first at the alignments of at the points of closest approach, 
then at distances away from those points \citep{Rorai12}.

Another possible explanation, if the local induction approximation is relevant, 
would be if the curvature maxima are stronger post-reconnection.
The inset in figure \ref{fig:orthocurv} does show that the maxima are slightly 
stronger post-reconnection, but this does not appear to be the
whole story.  To get a bigger picture we need to look beyond the curvature
profiles and see how the vortices are aligned with each other for points along
their entire trajectories. Starting with the alignments of their respective
Frenet-Serret frames at the points of closest approach, then for distances
away from those points.

\subsection{Orthogonal: Frenet-Serret orientation. \label{sec:orthoorient}}

Besides allowing us to calculate the curvature of the vortex lines, knowing
their trajectories allows us to calculate the Frenet-Serret frame \eqref{eq:FrenetS}.
This has been done for the $256^3$, $\delta_0=3$ calculation for
all times and provides quantitative support for the describing the local frame
at the reconnection point $\bx_r$ in terms of the {\it reconnection plane} 
and the {\it propagation plane} indicated in figure \ref{fig:yzplane}. 

Figures \ref{fig:TNBD} and \ref{fig:D-TNB} show the following evolution
of the Frenet-Serret frames at the closest points $\bx_{y,z}$: 
\ITM\item For $t<t_r$, pre-reconnection: 
\ITM\item figure \ref{fig:TNBD}a shows that the vorticity direction or tangent vectors 
$\bT_{y,z}$, the curvature vectors $\bN_{y,z}$  and the bi-normals $\bB_{y,z}$ converge
to their opposites as $t\rightarrow t_r$, both before and after reconnection.
\item figure \ref{fig:TNBD}b shows that $\bT_{y,z}$, $\bN_{y,z}$ and their unit 
separation vector $\bD=(\bx_z-\bx_y)/|\bx_z-\bx_y|$ all lie in the same plane, 
the {\it reconnection plane} in figure \ref{fig:yzplane}.
\item The bi-normals $\bB_{y,z}$ define the {\it propagation plane}.
\item The useful averages of the Frenet-Serret frames between $\bx_y$ and $\bx_z$
are these: 
\EQL{eq:TNBav}\hspace{-6mm}
\begin{array}{clclc}\medskip
\circ & \bN_{av} & = & 0.5(\bN_y-\bN_z) & \quad\mbox{and is parallel to}~\bD(t)\\ \medskip
\circ & \bT_{av} & = & 0.5(\bT_y-\bT_z) & \quad\mbox{and is perpendicular to both}~\bN_{av}~{\rm and}~\bD.\\ \medskip
\circ & \bB_{av} & = & 0.5(\bB_y+\bB_z) & \quad\mbox{and is perpendicular to}~\bT_{av},~
\bN_{av}~{\rm and}~\bD. \\
\end{array} \EN
\item That is: Subtract the tangent and curvature vectors because they
become anti-parallel as $t\nearrow t_r$, and add the bi-normals because they
are parallel as $t\nearrow t_r$. 
\item These alignments between $\bT_{av}$, $\bN_{av}$ and $\bB_{av}$ with 
$\bD$ form in the early stages, long before the reconnection at $t_r=8.9$. 
\ITN
\item The post-reconnection Frenet-Serret flip: 
\ITM\item Figure \ref{fig:D-TNB} shows that the directions of $\bT_{av}$ and $\bN_{av}$ swap and $\bD$ rotates by 
90$^\circ$ so that all three are still in the {\it reconnection plane} 
with the same relations to one another. 
\item $\bB_{av}$ remains orthogonal to the {\it reconnection plane}. 
\ITN\ITN

\BFG
\includegraphics[scale=.26]{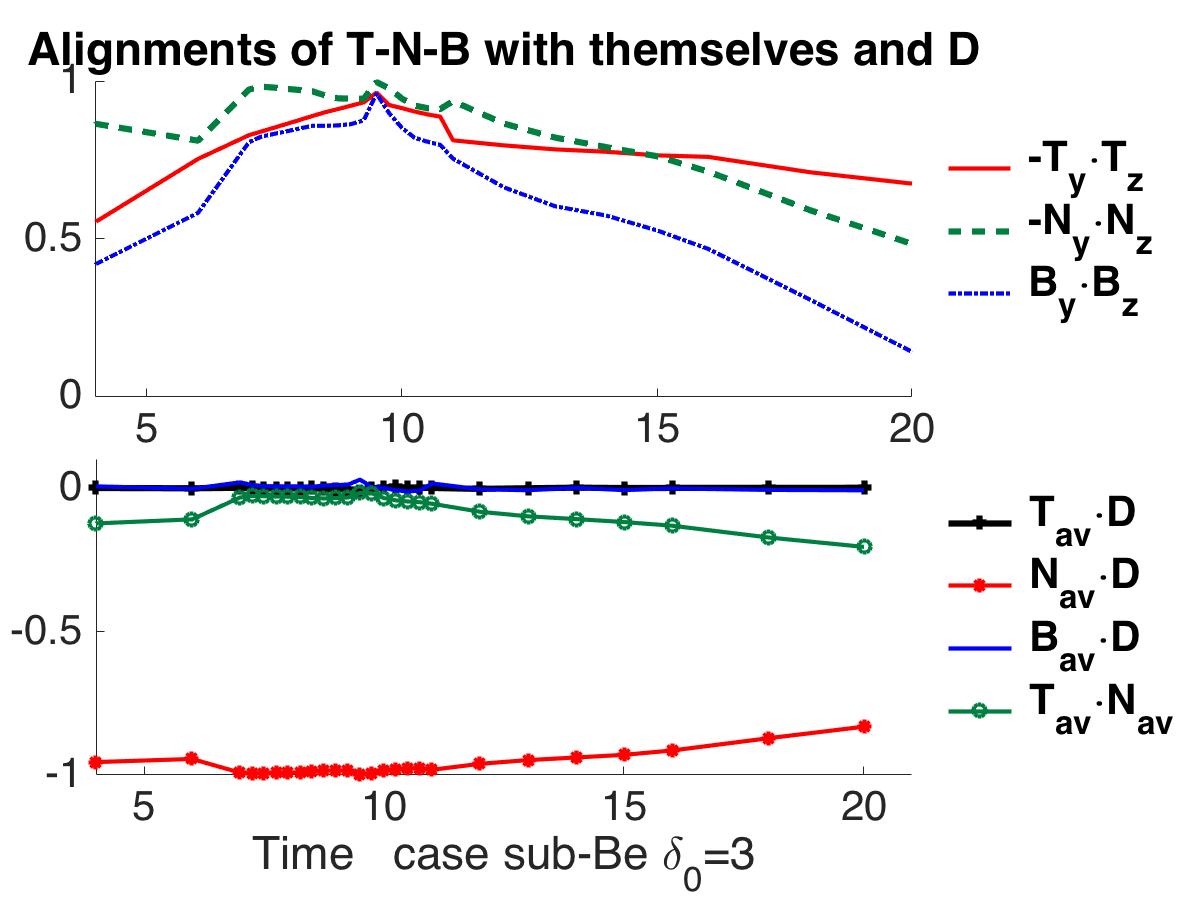}
\begin{picture}(0,0)
\put(-175,150){\Large a)}
\put(-175,65){\Large b)}
\end{picture}
\caption{\label{fig:TNBD} These two frames show how the basis vectors of the Frenet-Serret 
frames \eqref{eq:FrenetS} at the closest points $\bx_{y,z}$ of the $\delta_0=3$, $256^3$ 
calculation are aligned and help define 
the {\it reconnection plane} and the {\it propagation plane} used in figures
\ref{fig:yzplane} and \ref{fig:NPplane}.
a): The alignments between the Frenet-Serret components at $\bx_{y,z}$. 
For $t<t_r$ the tangent vectors $\bT_{y,z}$ and curvature vectors 
$\bN_{y,z}$ become increasingly anti-parallel as the reconnection time is approached 
while the bi-normal vectors $\bB_{y,z}$ become increasingly aligned.  
These trends are are reversed for $t>t_r$.
b): The alignments between the averages over $\bx_{y,z}$ of the Frenet-Serret frames 
defined by \eqref{eq:TNBav} with the separation 
vector $\bD$ between $\bx_y$ and $\bx_z$.  }
\EFG

\BFG\bminic{0.57}
\includegraphics[scale=.20]{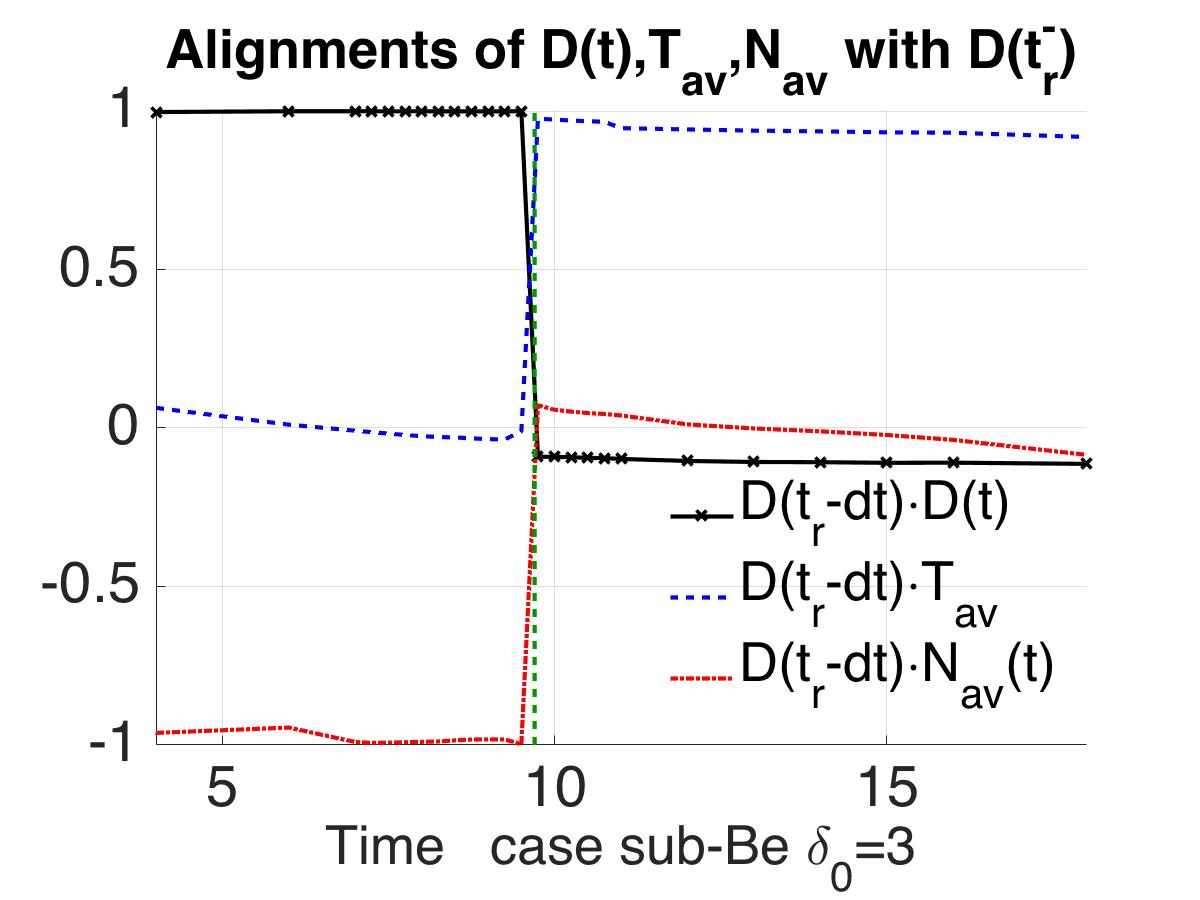}
\emini~~\bminic{0.40}
\caption{\label{fig:D-TNB} 
For the $\delta_0=3$, $256^3$ calculation, the alignments over time of 
$\bD(t_r^-)=\bD(t_r-dt)$ with $\bT_{av}$ and $\bN_{av}$, the averages \eqref{eq:TNBav} 
over the tangent and curvature components of the Frenet-Serret frames at
the closest points $\bx_{y,z}$. $\bB_{av}\cdot\bD(t_r^-)\approx0$ for all times.
$\bD(t_r^-)$ is the separation direction at $t=8.9$, just before reconnection.
}
\emini
\EFG

The alignments of the components of the Frenet-Serret frames at $\bx_{y,z}$ and
the separation of these points $\bD$ is significantly different than their
alignments for the anti-parallel case in Sec. \ref{sec:antipresults}. Comparisons
are discussed in the Summary in Sec. \ref{summary}.

\subsection{Angles at reconnection \label{sec:orthoangles}}

\BFG
\includegraphics[scale=.40]{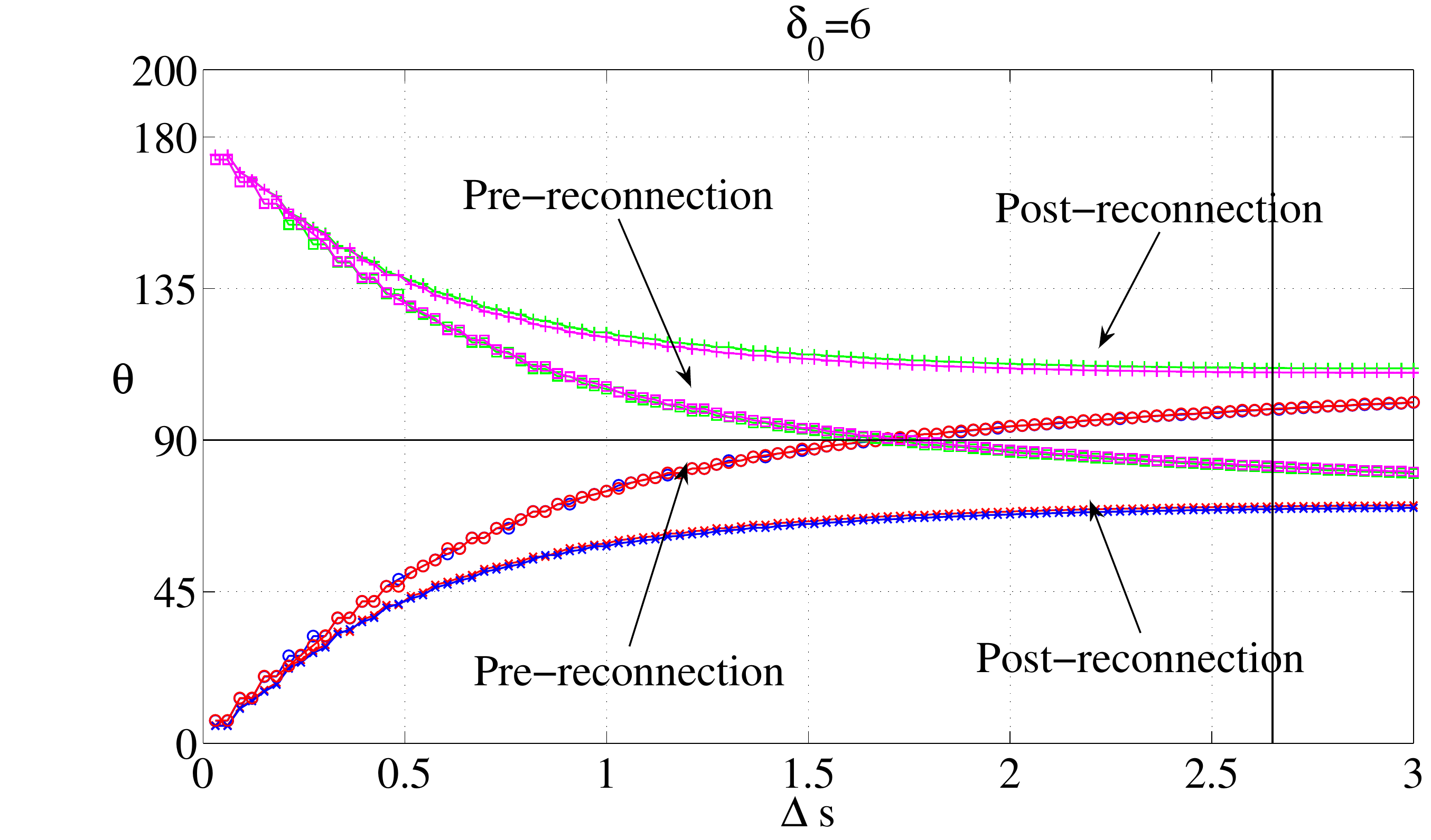}\\
\caption{\label{fig:angles2D} 
Angles between the segments in 3D space displayed in figure \ref{fig:NPplane} 
as a function of $\Delta s$ for $\delta_0=6$ and $t=63$ (pre-reconnection) and $t=65$ (post-reconnection). The angles between the arms of the same vortex, triplets [$\bx_{1,v1},~\bx_r,~\bx_{2,v1}$] and [$\bx_{1v2},~\bx_r,~\bx_{2v2}$], are indicated by circles (blue and red) before reconnection, and stars (blue and red) after reconnection. The angles sum to 360$^\circ$ as $\Delta s\rightarrow0$, implying that the segments 
near the reconnection point $\bx_r$ lie on a plane, specifically, the reconnection plane in figure \ref{fig:NPplane}
The sum of the angles grows with $\Delta s$ and is about $\sim362^\circ$ for $\Delta s\sim2.6$ (vertical black line), which is consistent with the arms of reconnecting vortices becoming progressively more convex and hyperbolic as they move away from $\bx_r$, both before and after reconnection.
The rapidly changing behavior for small $\Delta s$ is a kinematic result of how
the lines and angles indicated in figure \ref{fig:NPplane}
were chosen and is not significant. What could be more significant are 
the differences in the angles for intermediate $\Delta s$ and their influence upon
any Biot-Savart contributions to the velocities. 
}
\EFG

What additional dynamics might help us identify the differences between the
orthogonal and anti-parallel separation scaling laws?


One place to look is larger-scale alignments and long-range interactions. 
While the underlying Gross-Pitaevskii equations are local, the existence of 
the vortex structures means that such alignments should exist and should influence
the local motion of the $\rho\equiv0$ lines to the degree that the law of
Biot-Savart can be applied.  With the goal of identifying any such long-range
interactions, this section determines the evolution of pre- and post-reconnection 
angles between points on the extended structures.  

Relevant angles can be defined between the arms of the reconnecting vortices 
around the reconnection point as follows. 
\ITM\item[({\it i})] From the points of closest approach $\bx_{y,z}(t)$, 
define $\bx_r(t)=0.5(\bx_y(t)+\bx_z(t))$.
\item[({\it ii})] Move $\pm\Delta s$ along the arms of the vortices from 
$\bx_{y,z}(t)$ and identify four new points:  
${\bf x}_{1,y}$, ${\bf x}_{2,y}$, ${\bf x}_{1,z}$ and ${\bf x}_{2,z}$,
illustrated in the Nazarenko perspective sketch in figure 
\ref{fig:orthoazel} 
\item[({\it iii})] To get a fully three-dimensional perspective, note that these points lie
on outstretched arms such as in figure \ref{fig:orthoazel}b.
\item[({\it iv})] Connect the four points with ${\bf x}_{r}$ to form an extended
three-dimensional frame then calculate the angles $\theta_i$ between these four vectors. 
\item[({\it v})] Plots of $\theta_i(\Delta s)$ show {\it qualitatively} similar variations
independent of $\delta_0$ with these properties:
\ITM\item The sum of the $\theta_i$ grows as $\Delta s$ increases, starting from
$\sum_i \theta_i(\Delta s=0) =360^{\circ}$. 
\item This shows that the inner ($\Delta s\approx0$) structure is a plane. 
\item The vertical line in figure \ref{fig:angles2D} represents the $\Delta s_p$ 
for which $\sum_i \theta_i(\Delta s) =362^{\circ}>360^\circ$, indicating that
the structure is mildly hyperbolic.
\item $\sum_i \theta_i(\Delta s)$ increases with $\Delta s$, implying that the global 
structure is convex or hyperbolic, the opposite of a structure with an acute angles 
such as a pyramid.
\item The primary quantitative difference as $\delta_0$ decreases is that the $\theta=90^\circ$ 
pre-reconnection cross-over, at $\delta s=1.5$ for $\delta_0=6$, decreases. 
\item In addition, the geometry becomes more hyperbolic. That is $\sum_i \theta_i(\Delta s)$ increases 
as $\delta_0$ decreases with $\sum_i \theta_i(\Delta s)=370^\circ$ for$\delta_0=3$. 
\item The inner angles, that is the angles between the triplets [$\bx_{1,v1},~\bx_r,~\bx_{2,v1}$] 
and [$\bx_{1v2},~\bx_r,~\bx_{2v2}$], are larger than 90$^\circ$ before reconnection and smaller after. 
The difference is about 20$^\circ$ in all of the calculations and would be consistent 
with the observed slower approach and faster separation.
\item When looking down at the {\it reconnection plane} in either 
figures \ref{fig:NPplane} or figure \ref{fig:orthoNP}b, 
do not forget that the centre of the {\it reconnection plane}
is moving along the $y=z$ direction of the {\it propagation plane} and simultaneously 
dragging or pushing the extended arms as it moves, as in figure \ref{fig:orthoazel}b.
\ITN
\ITN

Understanding these features could provide us with some hints about the origins of
the anomalous scaling laws.  One hint could be the different angles at intermediate 
scales, $0.5\leq\Delta s\leq 2.5$.  Pre-reconnection, the this span has approximately 
retained its original $\theta\sim 90^\circ$ orthogonal alignment. Post-reconnection the 
angles over this span still sums to approximately $362^{\circ}$ degrees, but the angles on either side of the reconnection have jumped by $\pm20^{\circ}$. The sudden jump is due to
how the directions of the tangent and curvature vectors swap at reconnection.
Using the sketch in figure \ref{fig:NPplane}, this means that prior to reconnection
the angle in 3D space is between the triplets 
[$\bx_{1,v1},~\bx_r,~\bx_{2,v1}$] and [$\bx_{1v2},~\bx_r,~\bx_{2v2}$] 
and afterwards between the triplets
[$\bx_{1v1},~\bx_r,~\bx_{1v2}$] and [$\bx_{2v1},~\bx_r,~\bx_{2v2}$].
Similar behaviour is seen for all the $\delta_0$ cases. It is visualised for 
$\delta_0=3$, using two perspectives in figures \ref{fig:orthoazel} and \ref{fig:orthoNP}.
\AD{
Similar behaviour is seen for all the $\delta_0$ cases. one way to visualise 
the effect of this swap is to consider its effect upon the new angles for the
$\delta_0=3$ frames in figures \ref{fig:orthoazel} and \ref{fig:orthoNP}.}

To goal is find a model that links the sudden swaps in local alignments in 
figures \ref{fig:TNBD} and \ref{fig:D-TNB} with the nonlocal changes in 
figure \ref{fig:angles2D} and from that explains the anomalous orthogonal reconnection  
scaling.  This model must also accommodate the scaling of anti-parallel reconnection, 
which obeys the expected dimensional scaling both before and after reconnection. 
The final discussion in section \ref{sec:conclude} addresses what might be required.

\section{Anti-parallel results: approach, separation, curvature }\label{sec:antipresults}

In contrast to the orthogonal vortices, whose scaling laws
do not obey expectations, it will now be shown that the scaling of initially 
anti-parallel vortices obeys those expectations almost completely.

Figure \ref{fig:antipstreamrhoT0T4} illustrated the overall structure of our 
anti-parallel case before reconnection at $t=0.125$ and after reconnection
at $t=4$, where $t_r\approx2.44$. The very low density $\rho=0.05$ isosurface 
is the primary diagnostic for the vortex lines and as in \cite{Kerr11}, 
the post-reconnection density isosurfaces are developing a second
set of reconnections near $y=\pm5$ from which the first set of vortex rings will 
form. Eventually two stacks of vortex rings on either side of $y=0$ should form
from the additional waves along the original vortex lines. 
Large values of the gradient kinetic energy \eqref{eq:GPenergies} and the magnitude 
of the pseudovorticity \eqref{eq:pseudov} are shown using two additional 
isosurfaces and the trajectories of the $\rho\approx0$ pseudo-vorticity 
are traced with thick lines. 

Figure \ref{fig:antipstreamrhoTtr} provides two perspectives of the structure 
approximately at the reconnection time using the same isosurfaces and vortex lines 
as in figure  \ref{fig:antipstreamrhoT0T4}, except there are now two sets of vortex 
lines. figure \ref{fig:antipstreamrhoT0T4}a shows two red curves seeded using
$\rho\approx0$ points on the $y=0$, $x-z$ {\it perturbation plane} and two
blue curves seeded using
$\rho\approx0$ points on the $z=0$, $x-y$ {\it dividing  plane}, representing
respectively the pre- and post-reconnection trajectories. 

As for the $\delta_0=3$ orthogonal case in figure \ref{fig:orthoT6-12}b, 
the reconnection time isosurfaces have an extended zone of very low density 
around the reconnection point and a strong isosurface of the gradient kinetic 
energy outside this zone.
Large values of $|\omega_\rho|$ are also outside the reconnection zone.

\BFG
\hspace{-0mm}
\includegraphics[scale=.18]{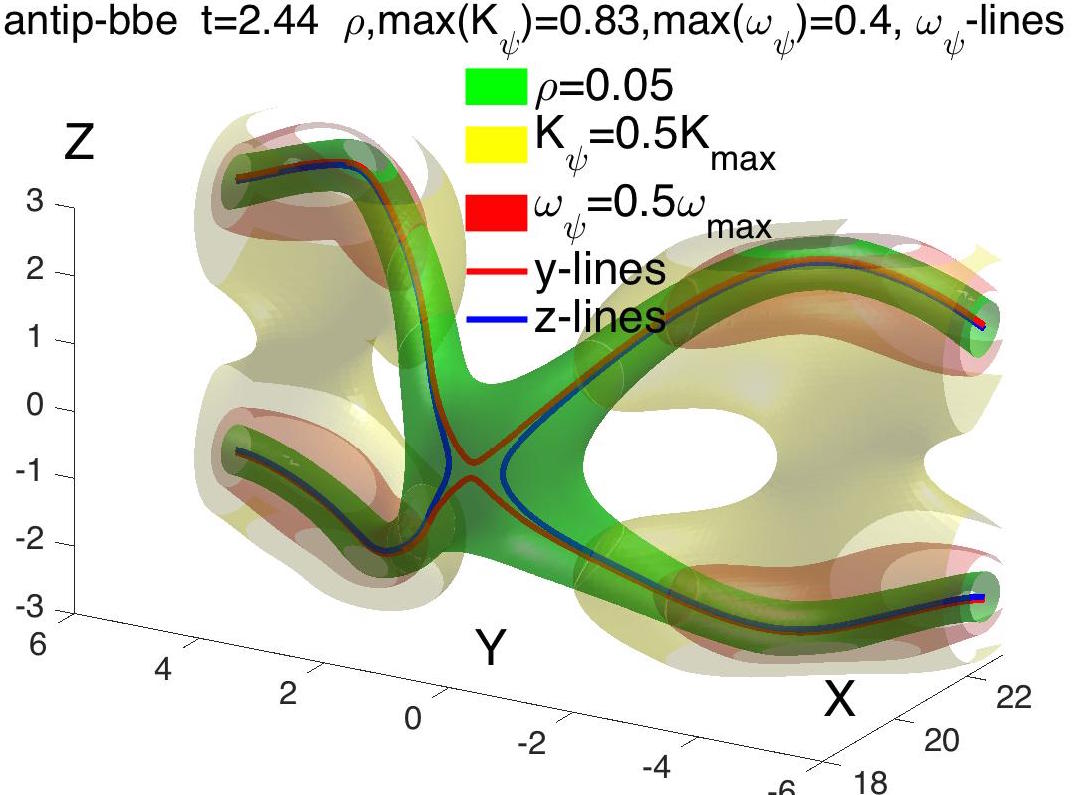}
\begin{picture}(0,0)
\put(-180,123){\LARGE(a)}
\end{picture}
\includegraphics[scale=.18]{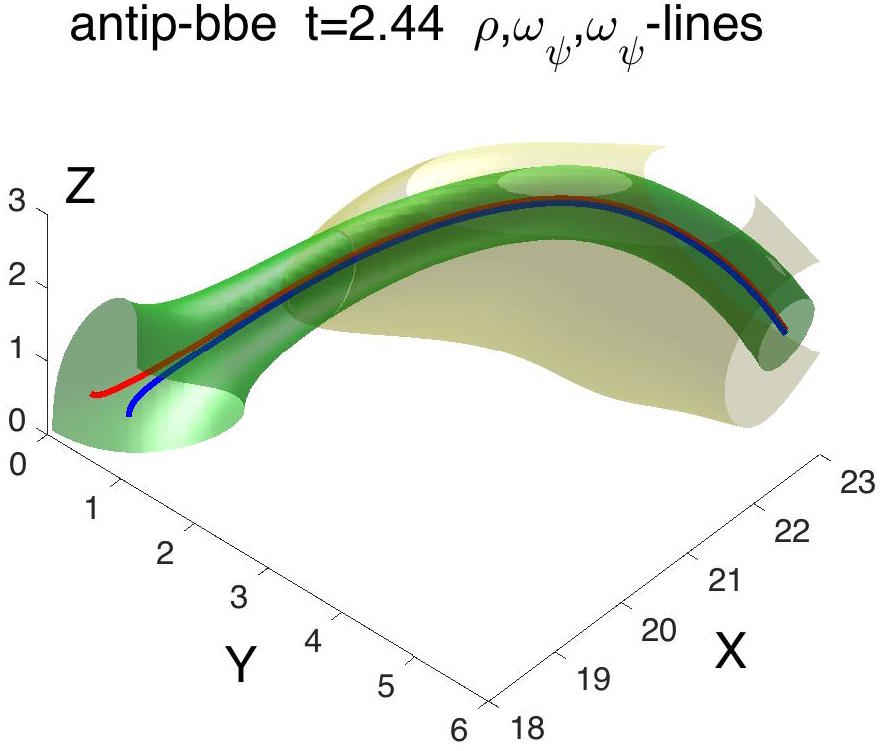}
\begin{picture}(0,0)
\put(-100,110){\LARGE(b)}
\end{picture}
\caption{\label{fig:antipstreamrhoTtr} Density, kinetic energy 
$K_{\psi}=0.5|\nabla\psi|^2$ \eqref{eq:GPenergies} and pseudo-vorticity 
$|\nabla\psi_r\times\nabla\psi_i|$ isosurfaces plus vortex lines a bit after
the time of the first reconnection,  $t=2.4375\gtrsim t_r=2.344$ 
from two perspectives with $\max(K_{\psi})=0.83$. $\max|\omega_\rho|=0.4$ 
for all the times. (a) shows four vortex 
lines, with each pair starting at the points of nearly zero density on the 
$y=0$ and $z=0$ symmetry planes respectively.  (b) gives a second perspective 
that looks into
a quadrant through the two symmetry planes and is designed to demonstrate that locally,
around $y=z=0$, the shape of the $\rho=0.05$ isosurface is symmetric on the two
symmetry planes.  
}
\EFG

Figure \ref{fig:antipinout} presents several possible fits for 
the pre- and post-reconnection separation scaling laws, with 
both the pre-reconnection incoming and post-reconnection outgoing separations 
following the predicted dimensional scaling of $\delta\sim |t_r-t|^{1/2}$ 
\eqref{eq:deltat}, unlike the orthogonal cases just shown. Furthermore,
for $\Delta t=|t-t_r|<0.5$, the $\delta_{\rm in}(t_r-t)$ and 
$\delta_{\rm out}(t_r-t)$ are almost mirror images of each other. Although
that is not the case for larger $\Delta t$.  

This suggests that unlike the orthogonal case, where we have attempted to relate the
asymmetric scaling laws to asymmetries in the underlying structure, for the 
anti-parallel case we want to identify physical symmetries that would predispose 
the scaling laws to be temporally symmetric and follow the dimensional prediction.

The purpose of giving two perspectives near the reconnection time in 
figure \ref{fig:antipstreamrhoTtr} is to clarify the physical symmetries at this time.
The overall structure in figure \ref{fig:antipstreamrhoTtr}a shows how all four legs 
of vorticity converge on the $y=z=0$ line where the $x-z$ or $x-y$ symmetry planes 
cross.  Then figure \ref{fig:antipstreamrhoTtr}b looks at the structure from the 
interior of the zone of nearly zero density from the perspective of the $\rho\approx0$, 
$y-z$ reconnection plane upon which the vortices reconnect. In a manner analogous to 
when the orthogonal vortices reconnect in figure \ref{fig:orthoT6-12}, the tangent
and normal vectors swap directions, with pre-reconnection tangent in $y$ becoming the 
post-reconnection normal, and the normal in $z$ becoming the tangent. This results in 
the butterfly trajectories around the $y=z=0$ line in figure \ref{fig:antipstreamrhoTtr}a. 

A useful, but not perfect, way to characterise the resulting structure is to use 
the proposal by \cite{deWaeleAarts94}, based on their Biot-Savart calculation of 
anti-parallel quantum vortices, that near reconnection the vortices form an 
equilateral pyramid. 
The pyramid for figure \ref{fig:antipstreamrhoTtr}a can be formed by straightening 
the 2 red and 2 blue curved legs of vorticity surrounding the $y=z=0$ line in $x$ 
to $(x,y,z)\approx(18,0,0)$, slightly to the left of the $\rho=0.05$ isosurface. 
These extensions would start from tangents to the red and blue lines. 
The angles $\theta_{yz}$ of these tangents with respect to the symmetry planes 
would define the sharpness of the tip of pyramid and depend on where the tangents
are taken from. $\theta_{y,z}=45^\circ$ is obtained for  $|y|=|z|\leq2$, 
which is where the vortices begin to bend back upon themselves. 

Why don't the pseudo-vortex lines continue to bend, or kink, until a sharp tip with
$\theta_{y,z}=45^\circ$ is obtained? Figure \ref{fig:antipcurv} provides the
clues by directly plotting the curvatures and the inclinations of the tips of 
the $y$ and $z$ pseudo-vortex lines, which are
the points of closest approach in figure \ref{fig:antipcurv}a.
The curvatures are determined by \eqref{eq:curvature} and the inclinations come 
from $\bN=(n_x,n_y,n_z)$, the direction of the curvature from \eqref{eq:normal}.
Independent of whether one considers the red $y$-vortices or blue $z$-vortices,
their maximum curvatures have nearly the same upper bounds and the same maxima 
as $n_{y,z}/n_x$, with 
\EQL{eq:tipangles}\begin{array}{cl}
n_z/n_x\nearrow6 \quad{\rm as}~~ t\nearrow t_r & \quad {\rm for}~~ t>t_r\quad{\rm and} \\
n_y/n_x\searrow6 \quad{\rm as}~~ t\searrow t_r & \quad {\rm for}~~ t_r>t \,.\end{array}\EN
$\tan\theta_\pm=6$ corresponds to 
$80^\circ$, not the $45^\circ$ angles of a pyramid.  

This means that as reconnection is approached, the direction of the curvature
$\bN$ begins to be parallel to their separation $\bD$, very reminiscent of what 
has been found 
for the orthogonal vortices as $t\rightarrow t_r$ in figures \ref{fig:TNBD} and
\ref{fig:D-TNB}. Which also means that the directions of the curvature and
tangents nearly swap during reconnection, which is also similar to, but not
exactly the same as, the orthogonal cases. And not what a true pyramid with
a sharp tip would do.

What is probably more important for getting the dimensional scaling for $\delta(t)$
is that both sets of curves are temporally symmetric.
That is, $\kappa(t_r-t)\approx\kappa(t-t_r)$ and
$n_z(t_r-t)/n_x(t_r-t)\approx n_y(t-t_r)/n_x(t-t_r)$. 
Figure \ref{fig:antipcurv}b emphasises this further by showing the dependence of 
$\kappa_y$ and $\kappa_z$ on their arclengths $s$ at times just before and
after the estimated reconnection time of $t_r=2.34$.

Finally, let us go back to the three-dimensional structures. First, consider how
the $t<t_r$ vortex lines in figure \ref{fig:antipstreamrhoT0T4}a evolve into 
the pinched red $y$-vortex lines in figure \ref{fig:antipstreamrhoTtr}a  along a path 
in the $x-z$ symmetry plane with
$(n_x,0,n_z)(t)=(\cos\theta_-,0,\sin\theta_-)\nearrow (\cos80^\circ,0,\sin80^\circ)$.
Then note that the process is reversed for $t>t_r$ as the pinch in the 
blue $z$-vortex lines relaxes in $z$ as $\delta_+\searrow 0$ along a path 
in the $x-z$ symmetry plane from its most extreme orientation:
$(n_x,n_y,0)(t)=(\cos\theta_+,\sin\theta_+,0)\searrow (\cos80^\circ,\sin80^\circ,0)$.
Therefore the evolution of the in and out angles is temporally symmetric, as hoped for, 
even though the tip is not a true pyramid \eqref{eq:tipangles}.

These results differ from the anti-parallel case in \cite{Zuccheretal2012}, which
shows a slower approach and faster separation, similar to the scaling observed
for a recent anti-parallel Navier-Stokes calculation \citep{HussainDuraisamy11}.
There could be several reasons for the
differences. First, the initial perturbations to the anti-parallel trajectories in
\cite{Zuccheretal2012} are pointed towards one another, and 
not in the direction of propagation as here.
Another difference is that their periodic boundaries in $y$ (the vortical direction)
are relatively close, unlike that direction here. In \cite{Kerr11} and in a recent 
set of Navier-Stokes reconnection calculations \cite{Kerr13}, the advantages 
of making that direction very long have already been discussed. 

\BFG
\bminil{0.55}
\hspace{-10mm}
\includegraphics[scale=.22]{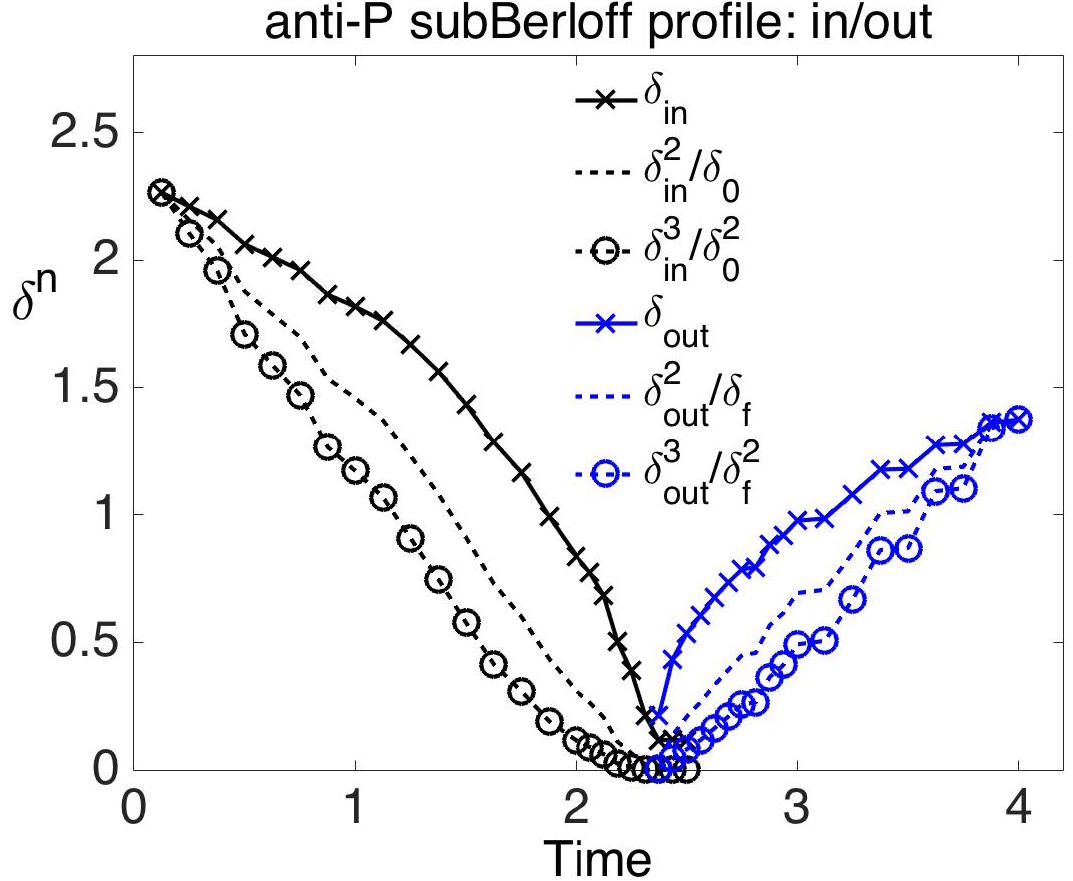} 
\emini~~~~\bminir{0.42}
\caption{\label{fig:antipinout}Separations near the time of the first reconnection 
of the anti-parallel vortices.  The positions
are found using the isosurface method on the symmetry planes. Both pre- 
and post-reconnection curves ({\it in} and {\it out})
follow $\delta_0\sim(t_r-t)^{1/2}$ most closely. }\emini
\EFG

\BFG
\includegraphics[scale=.18]{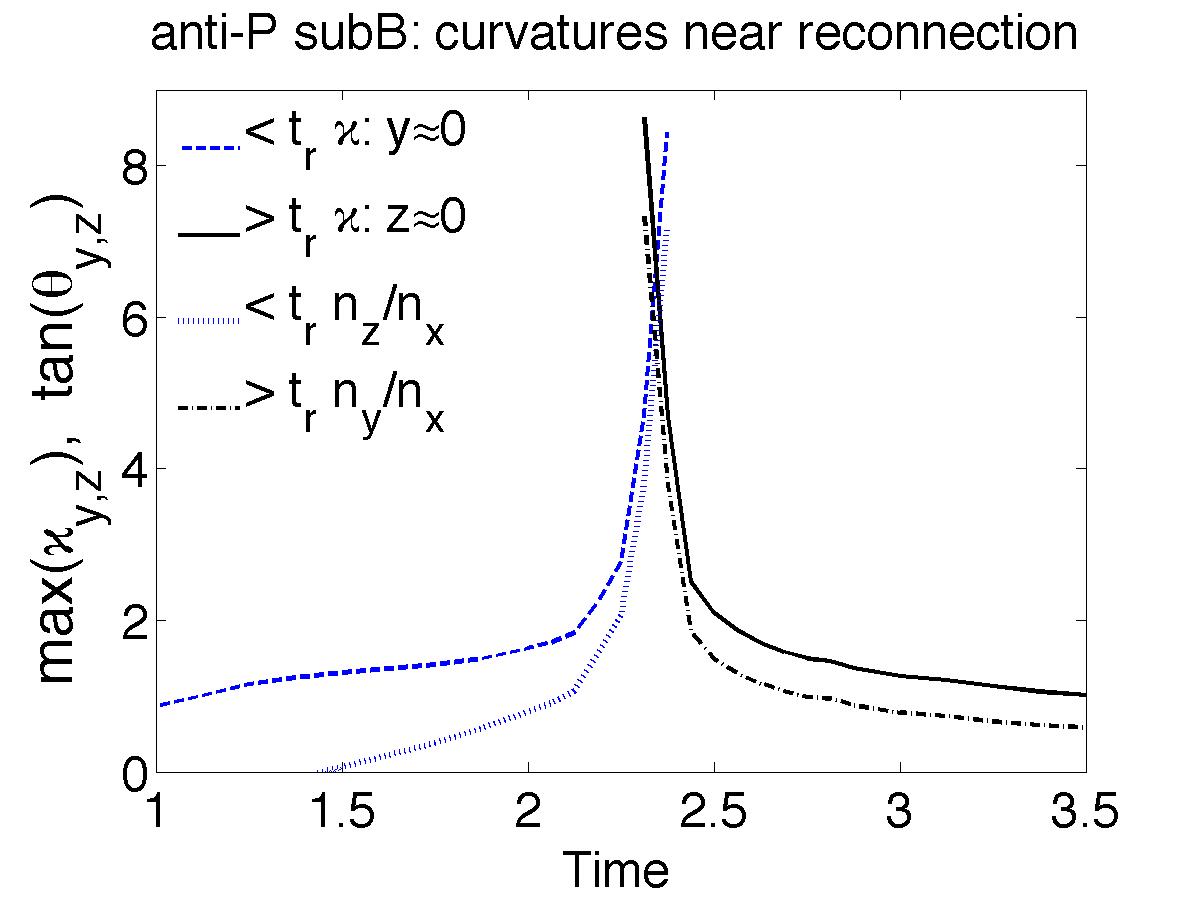}
\begin{picture}(0,0)
\put(-144,66){\LARGE(a)}
\end{picture}
\includegraphics[scale=.18]{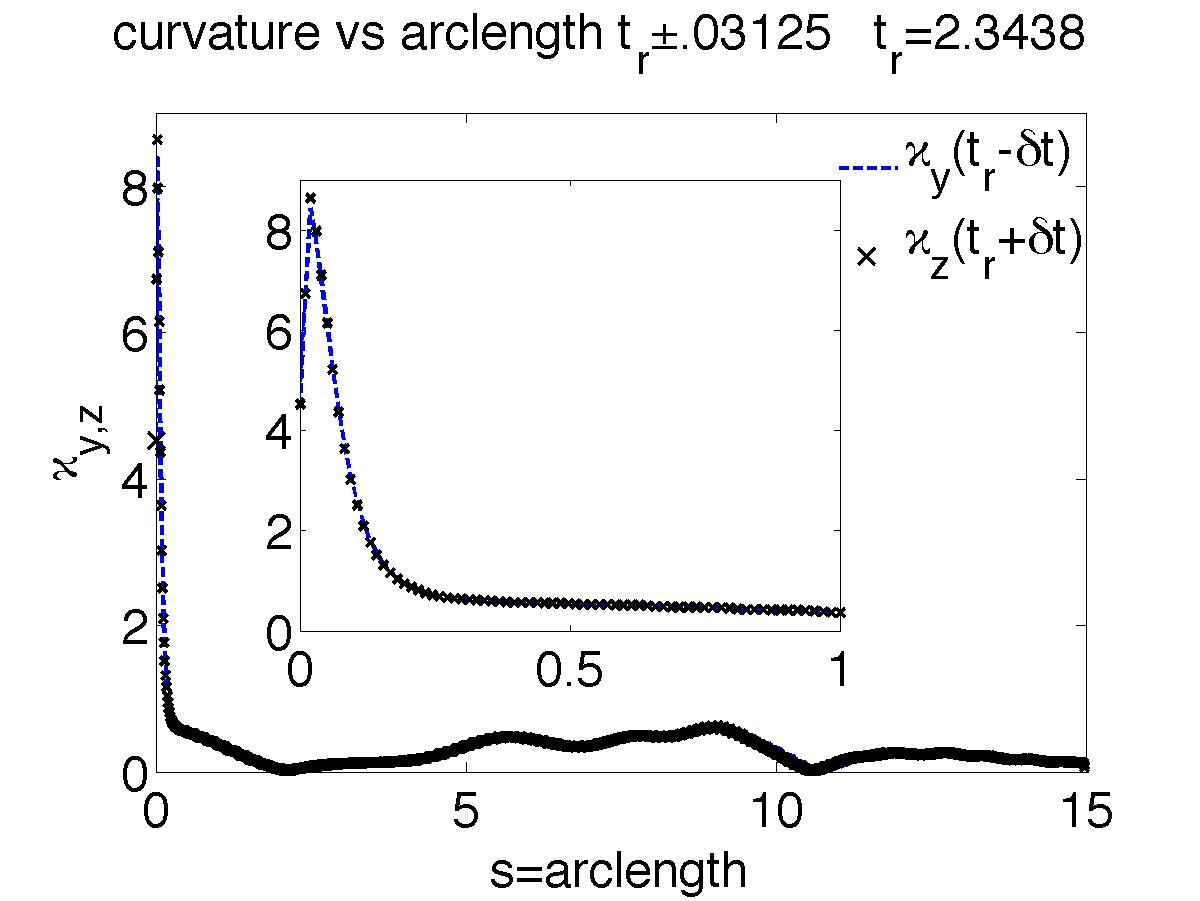}
\begin{picture}(0,0)
\put(300,90){\LARGE(b)}
\end{picture}
\caption{\label{fig:antipcurv}a:
Maximum curvatures, $\kappa_{y0}$, $\kappa_{z0}$, versus time.  
$\kappa_{y0}$ and $\kappa_{z0}$ are always on symmetry planes.  The
approach before reconnection and separation after reconnection are similar.
Reconnection time is where the curves cross, $t_r\approx2.344$.
b: Curvatures as a function of arclength along the vortices
near the reconnection time, with the profile from the new $z$ vortex taken
a bit before $t_r$ and the profile of the old $y$ vortex a bit after $t_r$.  
The curavatures near $y=z=0$
are similar, which drives similar approach and separation velocities.
}
\EFG
\newpage
\section{Summary \label{summary}}

The reconnection scalings of two configurations of paired vortices, orthogonal
and anti-parallel, have been found to have different scaling exponents.  

{\bf For the anti-parallel case,} the temporal scaling of both the pre- and 
post-reconnection separations obey the dimensional prediction, 
$\delta_\pm(t)\sim A\sqrt{\Gamma|t_r-t|}$ and the arms of the vortex pairs as the 
reconnection time is approached form an equilateral  pyramid with a smooth tip, 
which is in most respects qualitatively similar to the prediction of a Biot-Savart 
model \citep{deWaeleAarts94}.  Around the smooth tip the 
curvatures $\bN$ and separation $\bD$ are nearly parallel and as a result
the directions of curvature $\bN$ and the tangents $\bT$ almost swap during
reconnection.

{\bf The orthogonal cases}, in contrast, show 
asymmetric temporal scaling with respect to the reconnection time $t_r$.
For $t<t_r$, $\delta_-(t)\sim A_-(\Gamma|t_r-t|)^{1/3}$ and for $t>t_r$, 
$\delta_+(t)\sim A_+(\Gamma|t_r-t|)^{2/3}$, where the coefficients
$A_{\pm}$ are independent of the initial separation $\delta_0$. At $t\approx t_r$,
the reconnecting vortices are anti-parallel, with the vortices interacting in a
{\it reconnection plane} that contains the tangent and curvature vectors 
of both vortices as well as their separation vector. This results in the
directions of curvature and vorticity swapping during reconnection.

{\bf Two innovations} There are two innovations that allow these calculations to generate 
clean scaling laws.
One is an initial core profile that either minimises the formation of
secondary waves by the interacting vortices, or absorbs these waves. 
The second innovation is a way to trace the vortex lines that 
minimises the need to identify computational cells with small values of the density.

\subsection{Contrasting geometries \label{sec:summarygeometries}}

While this paper has emphasised the differences between the reconnection scaling
for the orthogonal and anti-parallel cases, a few similarites need to be noted when 
$t\approx t_r$ and $\bx\approx\bx_r$. That is within the reconnection zone in both time 
and space. First, within this zone the curvature vectors in both cases tend to align 
with the separation between the vortices and the opposing vortices are anti-parallel. The 
skew-symmetric alignment of the reconnecting orthogonal vortices seems to be sufficient
for imposing this local property.  Post-reconnection, in both
cases the curvature and tangent directions swap, or nearly so in the
anti-parallel case.

This also means that in neither case does a pyramid form in the zone immediately
around $\bx_r$.  Nor do any of the
fixed point solutions identified in \cite{Meichle12} form. 

However, further from $\bx_r$, the situation is different. For the orthogonal cases,
angles between the vortices imply a convex or hyperbolic structure. In
the anti-parallel case, a pyramid forms with nearly acute angles.

Let us summarise the additional key features of the orthogonal cases.

\paragraph{\bf Orthogonal} From an early time, the closest points of the originally
orthogonal vortices become locally anti-parallel and their respective curvature 
vectors become anti-aligned with the line of separation.  Combined, this implies that the 
local bi-normals for each line are nearly parallel and do not point in the 
direction of separation.  At reconnection,
the directions of the vorticity and curvature swap, and the sign of the bi-normal
reverses.  All of this is in a {\it reconnection plane} defined by the 
averages of the curvature and vorticity directions at the points of closest 
approach. 

Another useful perspective is the Nazarenko perspective in figure \ref{fig:orthoNP}, 
which is along a
45$^\circ$ angle in the $y-z$ plane. From this perspective, the vortices are always
distinct, without any loops, and one can see that the pre-reconnection vortices
approach the reconnection from one direction, 
and post-reconnection vortices separate in another. This perspective is used for 
finding non-local alignments and angles as in figure \ref{fig:angles2D}, which
show that the global alignment of the initial orthogonal vortices
is hyperbolic.

While the three-dimensional graphics for our orthogonal cases are qualitatively similar
to the equivalent Gross-Pitaevskii density isosurfaces in \cite{Zuccheretal2012}, our 
interpretation of the underlying geometry is different.  \cite{Zuccheretal2012} conclude 
that the deviations of Gross-Pitaevskii separations from the dimensional prediction is 
only near the reconnection and probably due to the rarefaction waves they report. 
In contrast, our analysis shows that the derivations start much before that and continue 
until the reconnection time. Furthermore, this local scaling appears to be 
a result of the global alignment that exists for almost all times.

Our conclusion is that the new scaling laws appear at all times for initially
orthogonal vortices and these scaling laws are probably tied to
the unique alignment of the Frenet-Serret frames that form early and 
continue through the reconnection period until to the end of each calculation. 
That these anomalous scaling laws are identical, about their respective
reconnection times, for all initial separations, implies that the
anomalous scaling laws could exist for initial vortices
with macroscopic initial separations extending to observable scales.

\section{Discussion \label{sec:conclude}}

These results leave us with several major questions. 
\ITM\item 
First, could the scaling laws shown here
be extended to the huge range of length scales in experiments? Because the new orthogonal
scaling laws appear at all times for initially orthogonal vortices and these scaling laws 
are tied to unique alignments that form early and continue through the reconnection 
period, it is possible that the anomalous scaling could apply to vortices on the
macroscopic, observable scales. 

However, what if the initial state is not strictly orthogonal? What seems to be true, 
based upon several additional curved configurations considered in
\cite{Rorai12} as well as cases from \cite{Zuccheretal2012}, is that 
the scaling of all quantum reconnection events should lie between the two extremes 
presented here. More work will be needed to determine when and for how long each
type of scaling dominates.

\item Second, can these cases be compared with the experiments using solid hydrogen markers?  
Improvements in both the experimental and numerical data sets will be needed before
that can be addressed properly.  Currently, a few isolated events in some of the 
experimental videos and the first experimental paper \citep{Bewleyetal_PNAS08} 
might be consistent with orthogonal scaling asymmetries described here. However,
in the best statistical analysis \citep{Paolettietal08}, the distributions of the
approaches and separations are clustered about the dimensional prediction, represented
by our anti-parallel case. 

\item Finally, how can the alignments quantified here for the orthogonal cases be used to
explain the anomalous reconnection scaling laws? The local swaps in the alignment 
of the Frenet-Serret frames for the orthogonal cases in figures \ref{fig:TNBD} and
\ref{fig:D-TNB} are probably too similar to the local swaps for the anti-parallel case 
in figure  \ref{fig:antipcurv} to explain the non-dimensional scaling laws.
So a better place to start might be to consider the large-scale alignments. However,
to use these alignments together with Biot-Savart to predict velocities could lead
nowhere since all of full Biot-Savart calculations find the dimensional, temporally 
symmetric scaling laws. 

\item[$\circ$] Nonetheless, Biot-Savart can be a useful place to start looking in the sense that
\eqref{eq:rho0v} provides us with a means to exactly determine the Gross-Pitaevskii
velocities, which could then be compared to the Biot-Savart predictions. Once the
differences have been identified, and from there the sources of these differences, we
should be on the road to explaining this new behaviour.
\ITN

\section*{Acknowledgements}
CR acknowledges support from the National Science Foundation, 
NSF-DMR Grant No.  0906109 and support of the Universit\'a di Trieste. 
RMK acknowledges support from the EU COST
Action program MP0806 ‘Particles in Turbulence’.
Discussions with C. Barenghi and M.E. Fisher have been appreciated.
Support with graphics from R. Henshaw is appreciated.

{}

\end{document}